\newtheorem{theorem}{Theorem}
\newtheorem{assumption}{Assumption}
\newtheorem{lemma}{Lemma}
\DeclareMathOperator{\diag}{diag}
\newenvironment{proof}[1][Proof]{\par\noindent\textbf{#1.} }{\hfill$\square$\par}
\begin{document}

\begin{frontmatter}

\title{The Soft-PVTOL: modeling and control\thanksref{footnoteinfo}} 

\thanks[footnoteinfo]{This paper was not presented at any conference. Corresponding author G.~Flores. Tel. +1 (956) 326-3297.}

\author[TAMIU]{Gerardo Flores}\ead{gerardo.flores@tamiu.edu}, 
\author[UTD]{Mark W. Spong}\ead{mspong@utdallas.edu}        

\address[TAMIU]{RAPTOR Lab, School of Engineering, College of Arts and Sciences, \\Texas A\&M International University, Laredo, TX 78041 USA}   
\address[UTD]{Erik Jonsson School of Engineering \& Computer Science, Department of Systems Engineering, \\University of Texas at Dallas, Richardson, TX 75080 USA}      

\begin{keyword}                           
Soft robotics; PVTOL; aircraft control; unmanned aerial vehicles; passivity-based control; Euler-Lagrange equations.                 
\end{keyword}                             

\begin{abstract}                          
This paper presents, for the first time, the soft planar vertical take-off and landing (Soft-PVTOL) aircraft. This concept captures the soft aerial vehicle's fundamental dynamics with a minimum number of states and inputs but retains the main features to consider when designing control laws. Unlike conventional PVTOL and multi-rotors, where altering position inevitably impacts orientation due to their underactuated design, the Soft-PVTOL offers the unique advantage of separating these dynamics, opening doors to unparalleled maneuverability and precision. We demonstrate that the Soft-PVTOL can be modeled using the Euler-Lagrange equations by assuming a constant curvature model in the aerial robot's arms. Such a mathematical model is presented in detail and can be extended to several constant curvature segments in each Soft-PVTOL arm. Moreover, we design a passivity-based control law that exploits the flexibility of the robot's arms. We solve the tracking control problem, proving that the error equilibrium globally exponentially converges to zero. The controller is tested in numerical simulations, demonstrating robust performance and ensuring the efficacy of the closed-loop system.
\end{abstract}

\end{frontmatter}

\section{Introduction}
The challenges inherent in modeling and controlling soft robots have prompted engineers to pursue simplified versions, particularly when the primary objective is achieving effective control \cite{armanini}. This pursuit is particularly relevant in soft mobile robots operating across six degrees of freedom, notably in aerial robotics. While significant strides have been made in modeling and controlling traditional aerial robots, including innovative configurations like convertible aircraft and fully actuated multi-rotors, the exploration of soft aerial robots remains limited. Despite their potential, research in this area is still in its infancy, with only a few studies addressing the prototypes and designs of soft aerial robots. In this endeavor, we aim to establish the fundamental principles of modeling and control for a soft aerial robot, leveraging the well-established paradigm of the Planar Vertical Take-Off and Landing (PVTOL) aircraft. The first concept of PVTOL aircraft was presented by Hauser et al. \cite{HAUSER1992665}. PVTOL prototypes have served as the cornerstone of aerial drone development, providing a robust platform for exploration and innovation. Our research aims to extend this foundation to soft aerial robotics. In this context, we present the Soft-PVTOL, where the agility of the soft arms lies in their ability to adapt dynamically, distributing forces effectively across multiple points of contact. Unlike rigid arms, which can only apply forces in predetermined directions, the soft-PVTOL arms adjust in real-time to changing conditions, enabling precise force control. This flexibility is demonstrated in our model through simulations where the arms' continuous curvature allows for smoother, more responsive maneuvers. The results highlight the system's capacity to perform complex, agile movements, particularly in scenarios where rigid systems would struggle due to their fixed degrees of freedom.
%
\begin{figure}[tb!]
\begin{center}
\includegraphics[width=\columnwidth]{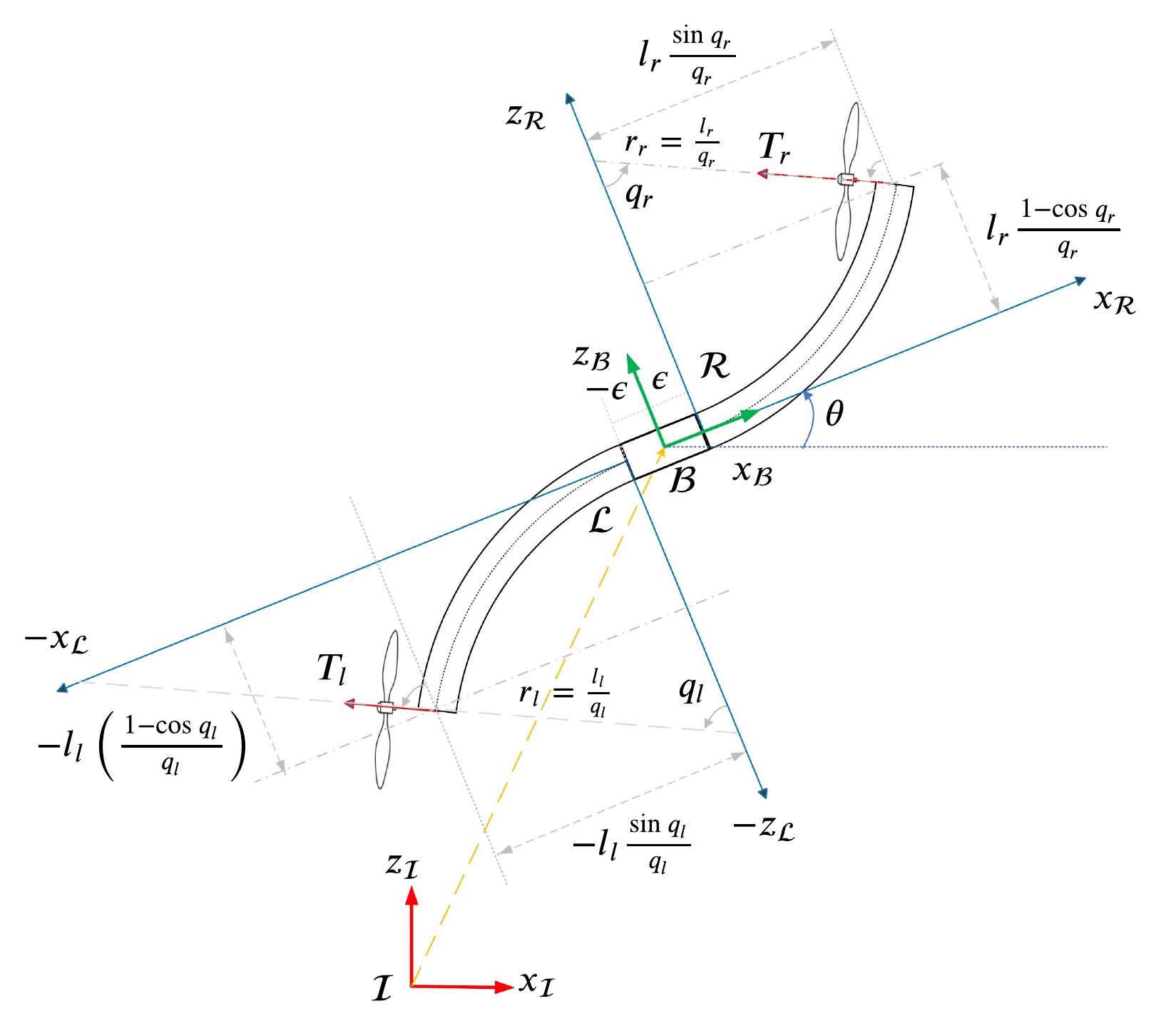}   
\caption{The concept of the Soft-PVTOL involves two soft arms modeled using the constant curvature approach. We adopt the convention that when the arm tilts below the $x$ axis in the frame, we assign a negative value, as illustrated in the case of $q_l$ in the diagram. This convention is already accounted for in \eqref{eq:coords}.} 
\label{fig:diagram}
\end{center}
\end{figure}
%
\subsection{State of the art}
In recent years, efforts to enhance drone maneuverability have led to exploring innovative configurations featuring tilting or folding rotor arms \cite{soro.2017.0120, 9851515, foldable, 8567932}. Incorporating foldable mechanisms offers numerous advantages, including size reduction, obstacle avoidance, navigation through narrow gaps, inspection of vertical surfaces, and object grasping and transportation. Consequently, adopting drones with adaptive morphology presents clear benefits in various operational scenarios. In this context, the potential of soft aerial robots offers unique advantages in maneuverability, flexibility, adaptability to complex environments, recovery from collisions in various directions, and even perching capabilities.

Recent research in the field of aerial soft robots has primarily focused on conventional multirotor configurations equipped with soft arms \cite{9762161, 9812044, HASHEMI2023160, 9635927}. While efforts have been made to design platforms \cite{9115993, soro.2022, 9837128}, and develop soft propellers \cite{9115983, 10103580}, these studies have yet to address the challenge of effectively modeling and controlling such soft aerial configurations. Furthermore, existing soft aerial robots typically feature passive soft arms that lack actuation capabilities, leaving the potential of fully actuated soft aerial robots unexplored.

The realm of modeling and control for soft robots continues to be a dynamic and challenging area of research \cite{SpongAnualReview}. Despite significant advancements, even the most popular soft robot, the manipulator, presents ongoing challenges \cite{soro.2017.0007, 10136424, Wang}. Exploring control strategies for soft aerial configurations remains relatively nascent, with only a few studies addressing this issue. For instance, Deng et al. \cite{pmlr-v155-deng21a} employ neural networks to model the dynamics of soft multi-rotors, enabling the generation of controllers that collectively manage locomotion.
\subsection{Contribution}
Exploring simplified models of soft aircraft is interesting because modeling and controlling soft robots are generally complex tasks. This led to the creation of the Soft-PVTOL, a compact version of a soft aerial robot. Its goal is to simplify complicated dynamics into an easy-to-understand mathematical model, which helps grasp the fundamental principles of the system dynamics. As with standard PVTOL models, the 2D formulation presented here serves as a critical step toward understanding and validating the fundamental behavior of the Soft-PVTOL system. Importantly, the principles and control methodologies developed in 2D extend naturally to 3D, just as with conventional PVTOLs and quadrotors, enabling a seamless transition to fully three-dimensional soft aerial robots.

By leveraging the idea of constant curvature of the rotor arms \cite{nnnikhil}, we demonstrate that the Soft-PVTOL can be modeled using the Euler-Lagrange approach widely used in robotics \cite{kellybook}. Such a model does not present singularities for all the feasible curvature changes. This approach is advantageous since one can exploit the system structure to propose nonlinear controllers. In this context, we also demonstrate that the Soft-PVTOL satisfies the passivity property \cite{sspongg}, and thus, we propose a passivity-based control demonstrating that the error equilibrium is globally exponentially stable. In designing the controller, we leverage the inherent soft properties of the arms, utilizing them as a virtual controller to govern the Soft-PVTOL's position. This innovative approach results in fully actuated dynamics, effectively decoupling orientation from position —a marked advantage over conventional PVTOL systems. Additionally, we address the control allocation problem by determining reference values for all four actuators within the system. This methodology holds promise for extension to soft multi-rotor configurations, promising further advancements in aerial robotics.
\subsection{Content}
The remainder of this paper is as follows. In Section \ref{sec:model}, we derive a mathematical model of the Soft-PVTOL, employing the Lagrange approach alongside the constant curvature paradigm for the robot's arms. Detailed computations are provided in an appendix. Additionally, this section shows the passivity property of the Soft-PVTOL. In Section \ref{sec:control}, we construct a passivity-based controller based on the well-known works of Spong, Ortega, and others, \cite{ortegapass,choprapassi,Spongbook,Ortegabook} to address the tracking control problem, establishing global exponential stability of error equilibrium. Here, we also tackle the control allocation challenge by determining references for system actuators—namely, motor thrusts and desired curvature values for the Soft-PVTOL arms. Section \ref{sec:sims} presents simulation results, validating the effectiveness of our proposed model and control strategy while addressing numerical challenges inherent to soft robots in simulation settings. Lastly, Section \ref{sec:conc} offers conclusions and outlines future directions, including the potential extension of this work to soft multi-rotors.
%
%
%
\section{The mathematical model of the Soft-PVTOL}\label{sec:model}
To capture the dynamics of the soft-PVTOL, we employ a reduced-order model that balances simplicity and accuracy while maintaining full actuation. This ensures robust and predictable control, allowing the direct application of established strategies. The flexibility of the soft arms enables the system to exhibit both underactuated and overactuated behavior, offering adaptability to various scenarios. While underactuation presents interesting challenges, our focus is to demonstrate the feasibility of soft aerial robotics to achieve overactuated dynamics by deforming their arms. This dual behavior preserves the advantages of soft robotic dynamics while ensuring stability and avoiding unnecessary complexities from rigid-link approximations.

The dynamic model considers the translation and rotation along the 2-D frame executed by the soft drone and the bending of the left and right rotor arms where we have assumed a constant curvature approach; see Fig. \ref{fig:diagram}. We define four types of main frames:
\begin{enumerate}
    \item The inertial frame $\mathcal{I}$.
    \item The body frame attached to the robot's center of gravity $\mathcal{B}$.
    \item The frames $\mathcal{L}_i$ for $i=\{ 0,1\}$ defining the beginning of the left-hand curve path $(\mathcal{L}_0)$ and the end of the left-hand curve path $(\mathcal{L}_1)$. 
    \item The frames $\mathcal{R}_i$ for $i=\{ 0,1\}$ defining the beginning of the right-hand curve path $(\mathcal{R}_0)$ and the end of the right-hand curve path $(\mathcal{R}_1)$.
\end{enumerate}

We first define the following generalized coordinates,
\begin{equation}\label{eq:general}
    \mathbf{q} = \begin{pmatrix}
        x_v & z_v & \theta & q_l & q_r
    \end{pmatrix}^{\intercal},
\end{equation}
where we consider the constant curvature approach for the left and right arms of the Soft-PVOL, where $q_l$ and $q_r$ are the curvatures of the left and right-hand side soft arms; see Fig. \ref{fig:thrust_right}. Soft manipulators in \cite{8722799, Cosimo:2020, Jones:2010, 10122091, 9844234} have used such an approach. It is worth noting that in a real Soft-PVTOL, these curvatures could be easily measured using simple Inertial Measurement Units (IMUs) positioned at the tip of the arms. This makes the proposed approach feasible and practical for real-world aerial Soft-PVTOL systems applications.

The vector of control inputs is defined as,
\begin{equation}
    \textbf{u} = \begin{pmatrix}
        \tau_x & \tau_z & \tau_{\theta} & \tau_l & \tau_r
    \end{pmatrix}^{\intercal}.
\end{equation}
Notice that we assume that there exists a mapping from $\textbf{u}$ to the vector of the actuators \begin{equation}
    \textbf{a} = \begin{pmatrix}
        T_l & T_r & \tau_l & \tau_r 
    \end{pmatrix}^{\intercal}.
\end{equation}
given by $\textbf{u} = B \textbf{a}$ that we must model later.

The system parameters are grouped into:
\begin{enumerate}
    \item $m$: mass of the robot; $I$: the total inertia moment of the robot
    \item $l_l$: length of the bending left-hand arm; $m_l$: mass of the left-hand rotor; $I_l$: the inertia moment of the left-hand arm
    \item $l_r$: length of the bending right-hand arm; $m_r$: mass of the right-hand rotor; $I_r$: the inertia moment of the right-hand arm.
    \item The rigid body of the drone is composed of a frame of length $2\epsilon$.
\end{enumerate}

\begin{assumption}
The center of mass of each soft arm is on its tip or equivalently on the origin of the frames $\mathcal{L}_1$ and $\mathcal{R}_1$.
\end{assumption}

We model the Soft-PVTOL using the Euler-Lagrange equation of motion as follows \cite{loria,ortegaloriabook,ortegaspong},
\begin{equation}\label{eq:EL}
     \frac{d}{dt} \left(\frac{\partial \mathcal{L}}{\partial \dot{\textbf{q}}} \right) - \frac{\partial \mathcal{L}}{\partial {\textbf{q}}} = \tau.
\end{equation}

To define the Lagrangian of the system, we first define the position of the Soft-PVTOL and that of the left and right-hand motor following Fig. \ref{fig:diagram} as follows,
\begin{equation}\label{eq:coords}
\begin{aligned}
p_{v} &= \begin{pmatrix}
    x_v  \\
    z_v 
\end{pmatrix}  , \\
p_{l} &= \begin{pmatrix}
    x_v  \\
    z_v 
\end{pmatrix} + \begin{pmatrix}
     -\epsilon - l_l\frac{ \sin{q_l}}{q_l} \\
     l_l\left[ \frac{1-\cos{q_l}}{q_l} \right]
\end{pmatrix}  , \\
p_{r} &= \begin{pmatrix}
    x_v  \\
    z_v 
\end{pmatrix} + \begin{pmatrix}
    \epsilon + l_r\frac{ \sin{q_r}}{q_r} \\
    l_r\left[ \frac{1-\cos{q_r}}{q_r} \right]
\end{pmatrix}
\end{aligned}
\end{equation}
where all the positions are w.r.t. the inertial frame. The time derivatives of the previous positions are given by,
\begin{equation}
\begin{aligned}
\dot{p}_{v} &= \begin{pmatrix}
    \dot{x}_v  \\
    \dot{z}_v 
\end{pmatrix}  , \\
\dot{p}_{l} &= \begin{pmatrix}
    \dot{x}_v  \\
    \dot{z}_v 
\end{pmatrix} + l_l\begin{pmatrix}
     \frac{\sin{q_l} - q_l\cos{q_l}}{q_l^2} \\
     \frac{q_l\sin{q_l}+\cos{q_l}-1}{q_l^2}
\end{pmatrix}\dot{q}_l  , \\
\dot{p}_{r} &= \begin{pmatrix}
    \dot{x}_v  \\
    \dot{z}_v 
\end{pmatrix} + l_r\begin{pmatrix}
    \frac{ q_r\cos{q_r} - \sin{q_r}}{q_r^2} \\
    \frac{q_r\sin{q_r} + \cos{q_r} - 1}{q_r^2}
\end{pmatrix}\dot{q}_r.
\end{aligned}
\end{equation}
The details of the computations of the elements in \eqref{eq:EL} can be verified in Appendix \ref{ap1}.
%
%
\subsection{Control input part}
Since the drone arms move freely due to their soft properties, combining two thrusts will generate forces in the coordinates $x-z$ and torque in $\theta$. The input control vector is:
\begin{equation}
\tau = 
\begin{pmatrix}
\tau_x  & \tau_z & \tau_{\theta} & \tau_l & \tau_r
\end{pmatrix}^{\intercal} .
\end{equation}
In the following, we find particular expressions for each of the above vector entries.
\subsubsection{Control input for the $(x,z)$ dynamics}\label{sssec:control_xz}
We first compute the total forces due to thrusts in the $x-z$ plane as follows. Refer to Fig. \ref{fig:thrust_right} where the right-hand side motor is depicted. Since the Soft-PVTOL position and orientation are measured w.r.t. the inertial frame, it follows that:
\begin{equation}\label{eq:pos_control}
\begin{aligned}
\begin{pmatrix}
    \tau_{x} \\
    \tau_{z}
\end{pmatrix} &=
    \underbrace{\begin{pmatrix}
        \cos{\theta} & - \sin{\theta} \\
        \sin{\theta} & \cos{\theta}
\end{pmatrix}}_{\textrm{Rotational matrix}} 
\Bigg[  
\underbrace{\begin{pmatrix}
    \cos{q_l}   & \sin{q_l} \\ 
    - \sin{q_l} & \cos{q_l}
\end{pmatrix}}_{\textrm{Rotational matrix$^{\intercal}$}}
    \underbrace{\begin{pmatrix}
    0 \\ 1    
    \end{pmatrix}}_{e_z}T_l \\
    &\hspace{30mm}+ 
    \underbrace{\begin{pmatrix}
    \cos{q_r}   & -\sin{q_r} \\ 
    \sin{q_r}   & \cos{q_r}
\end{pmatrix}}_{\textrm{Rotational matrix}}
    \begin{pmatrix}
    0 \\ 1
    \end{pmatrix}T_r
\Bigg] \\
&=\begin{pmatrix}
        \cos{\theta} & - \sin{\theta} \\
        \sin{\theta} & \cos{\theta}
\end{pmatrix} 
\underbrace{\begin{pmatrix}
    -T_r\sin{q_r} + T_l\sin{q_l} \\
    T_r\cos{q_r} + T_l\cos{q_l}
\end{pmatrix}}_{( \upsilon_x ,  \upsilon_z)^{\intercal}}.
\end{aligned}
\end{equation}
where $(\upsilon_x , \upsilon_z)$ are virtual control inputs. Notice that in the case of $q_l = q_r = 0$ the above expression results in $\begin{pmatrix}
        \tau_{x} \\
        \tau_{z}
    \end{pmatrix} = 
        \begin{pmatrix}
        -\sin{\theta}\\
        \cos{\theta}
    \end{pmatrix} (T_l + T_r)$
similar to the position control inputs in a conventional PVTOL.
\subsubsection{Control input for $\theta$}
Now, we compute the torque generated by both thrusts as follows. Let us consider Fig. \ref{fig:thrust_right}.
\begin{figure}[htb!]
\begin{center}
\includegraphics[width=\columnwidth]{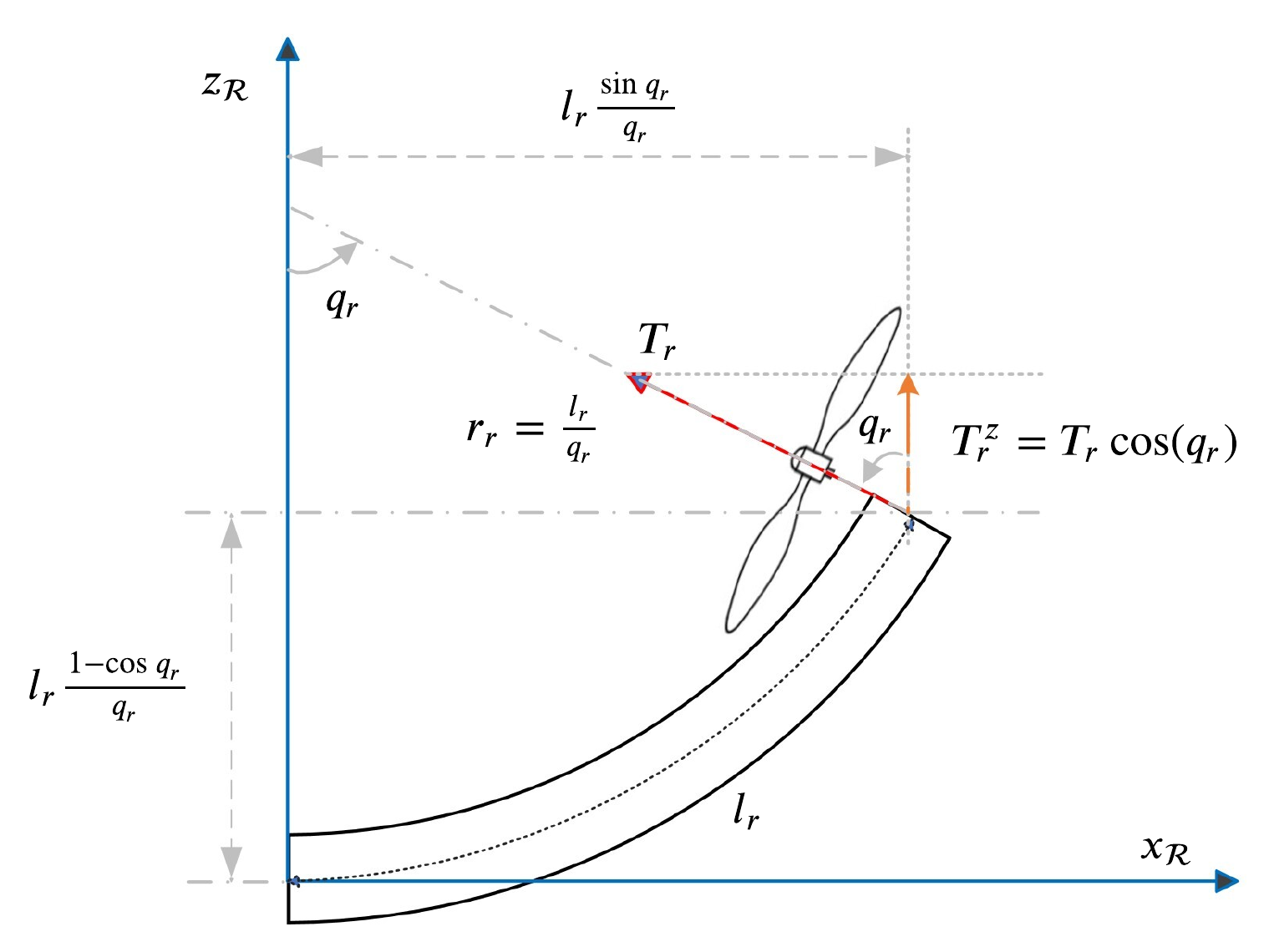}   
\caption{The thrust of the right-hand side motor. From the figure, it is easy to see that the torque generated by the right-hand-thrust $T_r$ is given by $\left(l_r\frac{\sin{qr}}{q_r}\right) \left( T_r\cos{q_r} \right)$. Please note that we have omitted the value of $\epsilon$ for simplicity.} 
    \label{fig:thrust_right}%
\end{center}
\end{figure}

According to Fig. \ref{fig:thrust_right}, the torque is given by,
\begin{equation}
    \tau_{\theta} =  \left[\textrm{proj}_{e_z}\begin{pmatrix}
        T_r^x \\ T_r^z
    \end{pmatrix} \right] \left[    l_r\frac{\sin{q_r}}{q_r} \right] -  \left[ \textrm{proj}_{e_z}\begin{pmatrix}
        T_l^x \\ T_l^z
    \end{pmatrix} \right] \left[ l_l\frac{\sin{q_l}}{q_l} \right] 
\end{equation}
where the scalar projection of vectors $(T_r^x, T_r^z)^{\intercal}$ and $(T_l^x, T_l^z)^{\intercal}$ onto $e_z = (0,1)^{\intercal}$ is given by
\begin{equation}
\begin{aligned}
\textrm{proj}_{e_z}\begin{pmatrix}
        T_r^x \\ T_r^z
    \end{pmatrix} &= \left( \frac{\begin{pmatrix}
        T_r^x \\ T_r^z
    \end{pmatrix} \cdot e_z}{\Vert e_z \Vert^2} \right), \\
\textrm{proj}_{e_z}\begin{pmatrix}
        T_l^x \\ T_l^z
    \end{pmatrix} &= \left( \frac{\begin{pmatrix}
        T_l^x \\ T_l^z
    \end{pmatrix} \cdot e_z}{\Vert e_z \Vert^2} \right).
\end{aligned}
\end{equation}
Since $T_r^z = T_r\cos{q_r}$ and $T_l^z = T_l\cos{q_l}$ are respectively the components of the RHS and LHS thrusts, it follows that\footnote{For simplicity, in what follows we will assume that $\epsilon = 0$.}:
\begin{equation}\label{eq:att_control_input}
    \tau_{\theta} =  l_r   T_r \cos{q_r}   \frac{\sin{q_r}}{q_r} -  l_l   T_l\cos{q_l}    \frac{\sin{q_l}}{q_l} 
\end{equation}
and notice that if $q_l=q_r=0$ the torque $\tau_{\theta}$ is similar to that of the conventional PVTOL.
%
\subsubsection{Control inputs for the curvature angles $(q_l,q_r)$}
Finally, the $(\tau_l,\tau_r)$ correspond to the tendons control input and are assumed to be independent of the current state variables.
%
\subsubsection{Control input in vector form}
\begin{equation}
\tau = 
\begin{pmatrix}
\tau_x  \\
\tau_z \\
\tau_{\theta} \\
\tau_l \\
\tau_r
\end{pmatrix} = 
\begin{pmatrix}
\begin{pmatrix}
 \cos{\theta} & - \sin{\theta} \\
 \sin{\theta} & \cos{\theta}
\end{pmatrix}
\underbrace{\begin{pmatrix}
 -T_r\sin{q_r} + T_l\sin{q_l}  \\
 T_r\cos{q_r} + T_l\cos{q_l}
\end{pmatrix}}_{(\upsilon_x, \upsilon_z)^{\intercal}} \\
l_r   T_r \cos{q_r}   \frac{\sin{q_r}}{q_r} -  l_l   T_l\cos{q_l}    \frac{\sin{q_l}}{q_l}   \\
\tau_l \\
\tau_r
\end{pmatrix}.
\end{equation}
%
\subsection{Matrix form}
Let us rewrite the previous equations in the matrix form, \cite{Spongbook}:
\begin{equation}\label{eq:system_Euler-Lagrange}
D(\textbf{q})\ddot{\textbf{q}} + C(\textbf{q}, \dot{\textbf{q}})\dot{\textbf{q}} + g(\textbf{q}) = \tau,
\end{equation}
where 
\begin{equation}
    \begin{aligned}
   \textbf{q} &= \begin{pmatrix}
        x_v & z_v & \theta & q_l & q_r
    \end{pmatrix}^{\intercal}, \\
    \dot{\textbf{q}} &= \begin{pmatrix}
        \dot{x}_v & \dot{z}_v & \dot{\theta} & \dot{q}_l & \dot{q}_r
    \end{pmatrix}^{\intercal}.      
    \end{aligned}
\end{equation}
After several computations, one gets:
\begin{equation}\label{eq:inertia}
D(\textbf{q}) = 
\begin{pmatrix}
d_{11} & 0 & 0 & d_{14} & d_{15} \\
0 & d_{22} & 0 & d_{24} & d_{25} \\
0 & 0 & d_{33} & 0 & 0 \\
d_{41} & d_{42} & 0 & d_{44} & 0 \\
d_{51} & d_{52} & 0 & 0 & d_{55}
\end{pmatrix},
\end{equation}
where 
\begin{equation}
\begin{aligned}
d_{11} &= d_{22} = m+m_l + m_r \\
d_{33} &= I \\
d_{44} &= l_l^2m_l\left(\frac{1}{q_l^2}\right) + 2l_l^2m_l\left(\frac{1}{q_l^4}\right) -2l_l^2m_l\left(  \frac{\cos{q_l}}{q_l^4} \right) \\
&\hspace{10mm}-2l_l^2m_l\left(  \frac{\sin{q_l}}{q_l^3} \right) + I_l  \\
d_{55} &= l_r^2m_r\left(\frac{1}{q_r^2}\right) + 2l_r^2m_r\left(\frac{1}{q_r^4}\right) -2l_r^2m_r\left(  \frac{\cos{q_r}}{q_r^4} \right) \\
&\hspace{10mm}-2l_r^2m_r\left(  \frac{\sin{q_r}}{q_r^3} \right) + I_r
\end{aligned}
\end{equation}
and 
\begin{equation}
\begin{aligned}
d_{14} &= d_{41} = -  l_lm_l \left(   \frac{\cos{q_l}}{q_l}  -   \frac{\sin{q_l}}{q_l^2}  \right) \\
d_{15} &= d_{51} = l_rm_r \left(   \frac{\cos{q_r}}{q_r}  -   \frac{\sin{q_r}}{q_r^2}  \right)\\
d_{24} &= d_{42} = -l_lm_l \left(  \frac{1}{q_l^2} -  \frac{\cos{q_l}}{q_l^2}  - \frac{\sin{q_l}}{q_l}  \right) \\
d_{25} &= d_{52} = -l_rm_r \left( \frac{1}{q_r^2} - \frac{\cos{q_r}}{q_r^2}  -   \frac{\sin{q_r}}{q_r}  \right)
\end{aligned}
\end{equation}
It can be verified that there is no singularity in any of the terms of $D(\textbf{q})$, even when $q_i \rightarrow 0$ where $i = \{l,r \}$. One can confirm this assertion by calculating the limit of each term $d_{mn}$ in $D(\textbf{q})$ as $q_i \rightarrow 0$, noting that the outcome consistently remains a constant and, in numerous instances, approaches zero. This is true for all the $d_{mn}$.

The Coriolis matrix, after some simplifications, is given by:
\begin{equation}
C(\textbf{q}, \dot{\textbf{q}}) = 
\begin{pmatrix}
0 & 0 & 0 & c_{14} & c_{15} \\
0 & 0 & 0 & c_{24} & c_{25} \\
0 & 0 & 0 & 0 & 0 \\
0 & 0 & 0 & c_{44} & 0  \\
0 & 0 & 0 & 0 & c_{55}
\end{pmatrix} 
\end{equation}
where
\begin{equation}
\begin{aligned}
c_{14} &= l_lm_l \left( \frac{q_l\sin{q_l} + \cos{q_l}}{q_l^2} \right) \dot{q_l} \\
&\hspace{10mm}+ l_lm_l \left( \frac{q_l\cos{q_l} - 2\sin{q_l}}{q_l^3} \right) \dot{q_l} \\
c_{15} &= -  l_rm_r \left( \frac{q_r\sin{q_r} + \cos{q_r}}{q_r^2} \right) \dot{q_r} \\
&\hspace{10mm}- l_rm_r \left( \frac{q_r\cos{q_r} - 2\sin{q_r}}{q_r^3} \right) \dot{q_r} ,
\end{aligned}
\end{equation}
\begin{equation}
\begin{aligned}
c_{24} &=  2l_lm_l \left(\frac{1}{q_l^3}\right)\dot{q_l} - l_lm_l \left( \frac{q_l\sin{q_l} + 2\cos{q_l}}{q_l^3} \right) \dot{q_l} \\
&\hspace{10mm}+ l_lm_l \left( \frac{q_l\cos{q_l} - \sin{q_l}}{q_l^2} \right) \dot{q}_l \\
c_{25} &=  2l_rm_r \left(\frac{1}{q_r^3}\right)\dot{q}_r - l_rm_r \left( \frac{q_r\sin{q_r} + 2\cos{q_r}}{q_r^3} \right) \dot{q}_r \\
&\hspace{10mm} + l_rm_r \left( \frac{q_r\cos{q_r} - \sin{q_r}}{q_r^2} \right) \dot{q}_r ,
\end{aligned}
\end{equation}
\begin{equation}
\begin{aligned}
c_{44} &= - l_l^2m_l \left(\frac{1}{q_l^3}\right)\dot{q}_l - 4l_l^2m_l \left(\frac{1}{q_l^5}\right)\dot{q}_l  \\
&\hspace{10mm} + l_l^2m_l \left( \frac{q_l\sin{q_l} + 4\cos{q_l}}{q_l^5} \right) \dot{q}_l \\
&\hspace{20mm}- l_l^2m_l \left( \frac{q_l\cos{q_l} - 3\sin{q_l}}{q_l^4} \right) \dot{q}_l \\
c_{55} &= - l_r^2m_r \left(\frac{1}{q_r^3}\right)\dot{q}_r - 4l_r^2m_r \left(\frac{1}{q_r^5}\right)\dot{q}_r \\
&\hspace{10mm}+ l_r^2m_r \left( \frac{q_r\sin{q_r} + 4\cos{q_r}}{q_r^5} \right) \dot{q}_r \\
&\hspace{20mm} - l_r^2m_r \left( \frac{q_r\cos{q_r} - 3\sin{q_r}}{q_r^4} \right) \dot{q}_r,
\end{aligned}
\end{equation}
where it can be shown that there are no singularities in any term of $C(\textbf{q}, \dot{\textbf{q}})$.

And finally,
\begin{equation}\label{eq:gGravity}
    g(\mathbf{q}) = \begin{pmatrix}
        0  \\
        g(m+m_l+m_r)  \\
        0 \\
        (g m_l l_l)\left(  \frac{q_l\sin{q_l} + \cos{q_l} - 1}{q_l^2} \right)  \\
        (gm_r l_r)\left(  \frac{q_r\sin{q_r} + \cos{q_r} - 1}{q_r^2}\right)  
    \end{pmatrix}.
\end{equation}
There are also no singularities for all the terms of $g(\textbf{q})$.
\subsection{Properties of the dynamic equations}
We introduce the following parameter equations:
\begin{equation}
\begin{alignedat}{3}
\Theta_1 &= m + m_l + m_r , \qquad & \Theta_2 &= I , \qquad & \Theta_3 &= l_lm_l, \\
\Theta_4 &= l_rm_r, \qquad & \Theta_5 &= I_l, \qquad & \Theta_6 &= I_r
\end{alignedat}
\end{equation}
With the above notation, the inertia matrix is rewritten as, 
{\small
\begin{equation}\label{eq:inertia_new}
    D(\mathbf{q}) = \begin{pmatrix}
        \Theta_1 & 0 & 0 & \Theta_3 \mathscr{D}_{1} & \Theta_4 \mathscr{D}_{3}   \\
        0 & \Theta_1 & 0 & \Theta_3 \mathscr{D}_{2}    & \Theta_4 \mathscr{D}_{4}  \\
        0 & 0 & \Theta_2 & 0 & 0 \\
        \Theta_3 \mathscr{D}_{1}   & \Theta_3 \mathscr{D}_{2} & 0 & l_l\Theta_3\mathscr{D}_{5} + \Theta_5  & 0 \\
         \Theta_4 \mathscr{D}_{3}  & \Theta_4 \mathscr{D}_{4}    & 0 & 0 & l_r\Theta_4 \mathscr{D}_{6}  + \Theta_6
    \end{pmatrix},
\end{equation}}
where
\begin{equation}
    \begin{aligned}
        \mathscr{D}_{1} &= \left(  \frac{\sin{q_l} - q_l\cos{q_l}}{q_l^2}  \right)  \\ 
        \mathscr{D}_{2} &= \left(   \frac{ \cos{q_l} + q_l\sin{q_l} -1  }{q_l^2}  \right) \\
        \mathscr{D}_{3} &= \left(  \frac{q_r\cos{q_r} - \sin{q_r} }{q_r^2} \right)  \\
        \mathscr{D}_{4} &= \left(  \frac{ \cos{q_r} + q_r\sin{q_r} - 1 }{q_r^2}  \right) \\
        \mathscr{D}_{5} &= \left( \frac{q_l^2 + 2 - 2\cos{q_l} - 2q_l\sin{q_l}  }{q_l^4} \right)\\
        \mathscr{D}_{6} &= \left( \frac{ q_r^2 + 2 - 2\cos{q_r} - 2q_r\sin{q_r} }{q_r^4} \right).
    \end{aligned}
\end{equation}
\subsubsection{Positiveness of the inertia matrix}
We first need to prove that $D(\mathbf{q})$ is positive definite for all the possible values of $\mathbf{q}$.
\begin{lemma}\label{lemm:matrix}
The matrix $D(\mathbf{q})$ in \eqref{eq:inertia_new} is positive definite.
\end{lemma}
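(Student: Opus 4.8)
The plan is to exploit the block structure of $D(\mathbf q)$ in \eqref{eq:inertia_new} together with the fact that $\tfrac12\dot{\mathbf q}^\intercal D(\mathbf q)\dot{\mathbf q}$ is the total kinetic energy of the Soft-PVTOL. First I would observe that the orientation coordinate $\theta$ decouples completely: the third row and column contain only the diagonal entry $\Theta_2 = I > 0$. Hence $D(\mathbf q)\succ 0$ if and only if the $4\times 4$ principal submatrix in the variables $(x_v,z_v,q_l,q_r)$ is positive definite, and I may discard $\theta$ from the outset. Before proceeding either route, I would record the single trigonometric identity on which everything rests, namely $\mathscr D_5 = \mathscr D_1^2 + \mathscr D_2^2$ and $\mathscr D_6 = \mathscr D_3^2 + \mathscr D_4^2$, which follows by expanding the squares and collapsing them with $\sin^2+\cos^2=1$; this is what links the diagonal arm entries to the off-diagonal coupling.

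For the conceptual core I would appeal to the origin of $D(\mathbf q)$. Writing $\xi=\dot{\mathbf q}$, the form $\tfrac12\xi^\intercal D\xi$ equals $\tfrac12\bigl(m\lVert\dot p_v\rVert^2 + m_l\lVert\dot p_l\rVert^2 + m_r\lVert\dot p_r\rVert^2 + I\dot\theta^2 + I_l\dot q_l^2 + I_r\dot q_r^2\bigr)$, a sum of six nonnegative terms, which already gives $D\succeq 0$ for free. To upgrade to strict positivity I would isolate the diagonal reserve $\tfrac12\bigl(m\lVert\dot p_v\rVert^2 + I\dot\theta^2 + I_l\dot q_l^2 + I_r\dot q_r^2\bigr)$, whose matrix is $\diag(m,m,I,I_l,I_r)\succ 0$ and which controls all five components of $\xi$. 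Since the remaining terms $\tfrac12 m_l\lVert\dot p_l\rVert^2$ and $\tfrac12 m_r\lVert\dot p_r\rVert^2$ are nonnegative, $\xi^\intercal D\xi=0$ forces $\xi=0$, proving $D\succ 0$. The identities $\mathscr D_5=\mathscr D_1^2+\mathscr D_2^2$ and $\mathscr D_6=\mathscr D_3^2+\mathscr D_4^2$ certify that $D$ is indeed this Hessian.

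As a fully self-contained alternative that never leaves the matrix, I would use a Schur complement. Partition the reduced block as $\left(\begin{smallmatrix}\Theta_1 I_2 & B\\ B^\intercal & E\end{smallmatrix}\right)$ with $E=\diag(l_l\Theta_3\mathscr D_5+\Theta_5,\ l_r\Theta_4\mathscr D_6+\Theta_6)$; since $\Theta_1 I_2\succ 0$, positive definiteness is equivalent to $S:=E-\Theta_1^{-1}B^\intercal B\succ 0$. Using $\mathscr D_5=\mathscr D_1^2+\mathscr D_2^2$ one finds $S_{11}=l_l^2 m_l\tfrac{m+m_r}{\Theta_1}\mathscr D_5+I_l>0$, and for the determinant the Cauchy–Schwarz bound $(\mathscr D_1\mathscr D_3+\mathscr D_2\mathscr D_4)^2\le\mathscr D_5\mathscr D_6$ combined with the mass identity $(m+m_l)(m+m_r)-m_l m_r=m\Theta_1$ yields $\det S \ge \tfrac{m\,l_l^2 l_r^2 m_l m_r}{\Theta_1}\mathscr D_5\mathscr D_6 + I_r l_l^2 m_l\tfrac{m+m_r}{\Theta_1}\mathscr D_5 + I_l l_r^2 m_r\tfrac{m+m_l}{\Theta_1}\mathscr D_6 + I_l I_r>0$. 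Sylvester's criterion then closes the argument.

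The main obstacle, common to both routes, is verifying the Pythagorean identity $\mathscr D_5=\mathscr D_1^2+\mathscr D_2^2$ and, in the algebraic route, taming the off-diagonal Schur term through Cauchy–Schwarz so that the mass identity can be invoked. It is worth stressing where strict positivity actually comes from: the inertia reserves $I,I_l,I_r>0$ (the $I_lI_r$ term in $\det S$ and the lone entry $\Theta_2$ for $\theta$) together with $\Theta_1=m+m_l+m_r>0$ are exactly what prevent the two arm modes from degenerating into a semidefinite form. Finally, well-definedness at $q_l,q_r\to 0$ is not an issue, since, as already noted below \eqref{eq:gGravity}, every $\mathscr D_i$ extends continuously to a finite limit there, so the bounds above hold uniformly in $\mathbf q$.
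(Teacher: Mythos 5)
Your proof is correct and, notably, unconditional — it takes a genuinely different route from the paper's. The paper proves the lemma by applying Sylvester's criterion head-on to \eqref{eq:inertia_new}: it expands the fourth leading principal minor by hand and the full $5\times 5$ determinant with Mathematica, studies the limits of these expressions as $q_l,q_r \to 0$ and $\pm\infty$, asserts that the worst case sits at $q_l=q_r=0$, and concludes positivity only under sufficient parameter conditions (e.g.\ $\Theta_1 > \Theta_3^2/(l_l\Theta_3+4\Theta_5)$ for the fourth minor, plus a further inequality for the full determinant). Your argument replaces all of this with structure. The Pythagorean identity $\mathscr{D}_5=\mathscr{D}_1^2+\mathscr{D}_2^2$, $\mathscr{D}_6=\mathscr{D}_3^2+\mathscr{D}_4^2$ does hold, since $(\sin q_l - q_l\cos q_l)^2+(\cos q_l + q_l\sin q_l -1)^2 = q_l^2+2-2\cos q_l - 2q_l\sin q_l$, and it certifies that $\dot{\mathbf q}^{\intercal}D(\mathbf q)\dot{\mathbf q}$ is twice the kinetic energy; consequently $D(\mathbf q)-\diag(m,m,I,I_l,I_r)$ is the matrix of the positive semidefinite form $m_l\Vert\dot p_l\Vert^2+m_r\Vert\dot p_r\Vert^2$, so $D(\mathbf q)\succeq\diag(m,m,I,I_l,I_r)\succ 0$ uniformly in $\mathbf q$ and for \emph{all} positive parameter values. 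Your Schur-complement/Cauchy--Schwarz variant reaches the same unconditional conclusion purely algebraically, with the mass identity $(m+m_l)(m+m_r)-m_lm_r=m\Theta_1$ doing the work the paper delegates to symbolic computation. The comparison is instructive: the paper's route produces explicit minor formulas but rests on an unproved extremality claim (that the maximum of $\xi_1$ and minimum of $\xi_2$ occur at $q_l=0$) and imposes parameter restrictions that your argument reveals to be artifacts of loose bounding rather than genuine hypotheses — indeed, your decomposition gives $\det(D_4)=\Theta_1\Theta_2\bigl[l_l(m+m_r)\Theta_3\mathscr{D}_5+\Theta_1\Theta_5\bigr]>0$ with no condition whatsoever — while also explaining \emph{why} positivity holds: the rotor masses can only add a positive semidefinite contribution on top of the rigid-body diagonal inertia.
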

\begin{proof}
See Appendix \ref{app:lemma_matrix}.
\end{proof}
%
%
\subsubsection{Skew symmetry property}
On the other hand, using the Christoffel symbols as suggested by \cite{Spongbook}, the Coriolis matrix is expressed as follows:
\begin{equation}\label{eq:Coriolis_reduced}
    C(\textbf{q}, \dot{\textbf{q}}) = \begin{pmatrix}
        0 & 0 & 0 & \Theta_3 \mathscr{C}_{1} \dot{q_l} & - \Theta_4 \mathscr{C}_{3} \dot{q_r} \\
        0 & 0 & 0 &  \Theta_3 \mathscr{C}_{2} \dot{q}_l &  \Theta_4 \mathscr{C}_{4} \dot{q}_r \\
        0 & 0 & 0 & 0 & 0 \\
        0 & 0 & 0 & - 2l_l\Theta_3 \mathscr{C}_{5} \dot{q}_l & 0  \\
        0 & 0 & 0 & 0 & - 2l_r\Theta_4 \mathscr{C}_{6} \dot{q}_r t
    \end{pmatrix},
\end{equation}
where
\begin{equation}
\begin{aligned}
    \mathscr{C}_{1} &= \frac{ [q_l^2-2]\sin{q_l} + 2q_l\cos{q_l}  }{q_l^3} \\
    \mathscr{C}_{2} &=  \frac{[q_l^2-2]\cos{q_l} - 2q_l\sin{q_l} + 2}{q_l^3} \\
    \mathscr{C}_{3} &=  \frac{[q_r^2-2]\sin{q_r} + 2q_r\cos{q_r}}{q_r^3}\\
    \mathscr{C}_{4} &= \frac{[q_r^2-2]\cos{q_r} - 2q_r\sin{q_r} + 2 }{q_r^3} \\
    \mathscr{C}_{5} &=  \frac{\left(q_l\cos{\left(\frac{q_l}{2}\right)} - 2\sin{\left(\frac{q_l}{2} \right)}\right)^2}{q_l^5} \\
    \mathscr{C}_{6} &=  \frac{\left(  q_r\cos{\left(\frac{q_r}{2}\right)} - 2\sin{\left(\frac{q_r}{2}\right)} \right)^2}{q_r^5} 
\end{aligned}
\end{equation}
and $g(\mathbf{q})$ is defined as in \eqref{eq:gGravity}.
\begin{lemma}\label{eq:SkewMatrix}
The matrix $\dot{D}(\textbf{q}) - 2C(\textbf{q}, \dot{\textbf{q}})$ is skew-symmetric.
\end{lemma}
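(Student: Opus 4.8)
The plan is to exploit the fact, emphasized in \eqref{eq:Coriolis_reduced}, that $C(\mathbf{q},\dot{\mathbf{q}})$ is built from the Christoffel symbols of the first kind associated with the inertia matrix $D(\mathbf{q})$. For this particular choice of Coriolis matrix, the skew-symmetry of $\dot{D}-2C$ is a structural identity that follows solely from the symmetry $d_{ij}=d_{ji}$ of $D$ in \eqref{eq:inertia_new}. I would therefore prove the statement at this level of generality rather than manipulating the explicit $\mathscr{C}_i$ terms.

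First I would write the $(k,j)$ entry of the Christoffel-based Coriolis matrix as
\[
c_{kj}=\frac{1}{2}\sum_{i}\left(\frac{\partial d_{kj}}{\partial q_i}+\frac{\partial d_{ki}}{\partial q_j}-\frac{\partial d_{ij}}{\partial q_k}\right)\dot{q}_i,
\]
and the $(k,j)$ entry of $\dot{D}$ as $\dot{d}_{kj}=\sum_i (\partial d_{kj}/\partial q_i)\dot{q}_i$. Setting $N:=\dot{D}-2C$ and subtracting, the $\partial d_{kj}/\partial q_i$ contribution cancels, leaving
\[
N_{kj}=\sum_{i}\left(\frac{\partial d_{ij}}{\partial q_k}-\frac{\partial d_{ki}}{\partial q_j}\right)\dot{q}_i.
\]
Next I would form $N_{jk}$ by swapping $k\leftrightarrow j$ and invoke the symmetry of $D$ to rewrite $\partial d_{ij}/\partial q_k=\partial d_{ji}/\partial q_k$ and $\partial d_{ki}/\partial q_j=\partial d_{ik}/\partial q_j$. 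A direct comparison then gives $N_{kj}=-N_{jk}$, that is $N=-N^{\intercal}$, which is precisely skew-symmetry.

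A simplification specific to this model is that $D(\mathbf{q})$ depends only on the curvatures $q_l$ and $q_r$, so all partial derivatives with respect to $x_v$, $z_v$, and $\theta$ vanish and only the components $\dot{q}_l,\dot{q}_r$ survive in the sums above; this is consistent with the fact that the only nonzero columns of $C$ in \eqref{eq:Coriolis_reduced} are those multiplying $\dot{q}_l$ and $\dot{q}_r$. The main obstacle is not the abstract identity but ensuring that the explicit matrix \eqref{eq:Coriolis_reduced} genuinely coincides with the Christoffel construction applied to \eqref{eq:inertia_new}. If a self-contained check is preferred, I would instead verify skew-symmetry entry-by-entry: confirming that the diagonal entries $-2l_l\Theta_3\mathscr{C}_5\dot{q}_l$ and $-2l_r\Theta_4\mathscr{C}_6\dot{q}_r$ are matched by the corresponding diagonal entries of $\dot{D}$, so that the diagonal of $N$ vanishes, and that the off-diagonal entries satisfy $N_{kj}=-N_{jk}$. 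This reduces to differentiating the $\mathscr{D}_i$ of \eqref{eq:inertia_new} with respect to $q_l,q_r$ and matching them against the $\mathscr{C}_i$, a routine but lengthy calculation that I would relegate to the appendix.
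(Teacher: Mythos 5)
Your proof is correct, but it takes a genuinely different route from the paper. The paper's proof (Appendix \ref{app:lemmaSkew}) is a brute-force verification for this specific model: it explicitly computes $\dot{D}(\mathbf{q})$ in \eqref{eq:inertia_dot}, writes out $-2C(\mathbf{q},\dot{\mathbf{q}})$ in \eqref{eq:Coriolis_reduced_subs}, adds them, and observes that the resulting matrix \eqref{eq:skew} has zero diagonal blocks and antisymmetric off-diagonal entries. Your primary argument is instead the classical structural one: for any Coriolis matrix built from the Christoffel symbols of $D$, the identity $N_{kj}=\sum_i\bigl(\partial d_{ij}/\partial q_k-\partial d_{ki}/\partial q_j\bigr)\dot{q}_i$ together with $d_{ij}=d_{ji}$ forces $N=-N^{\intercal}$, and your index manipulation is carried out correctly. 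What your approach buys is brevity and generality — it would survive unchanged if the model were extended to several constant-curvature segments per arm, and it is immune to algebra slips in the individual $\mathscr{C}_i$. What it costs is exactly the caveat you yourself identify: skew-symmetry of $\dot{D}-2C$ holds for \emph{the} Christoffel-based $C$, so your proof is conditional on the displayed matrix \eqref{eq:Coriolis_reduced} actually being that matrix; the paper asserts this (citing the Christoffel construction) but the assertion is only certified by the kind of explicit entry-wise computation the paper's appendix performs. Your fallback entry-by-entry check (diagonal of $\dot{D}$ cancelling $2C$'s diagonal, off-diagonal antisymmetry) is essentially the paper's proof verbatim, so the two arguments are complementary: the structural identity explains \emph{why} the lemma must hold, while the paper's computation confirms that the concrete $C$ and $D$ of this model are mutually consistent so that the identity applies.
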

\begin{proof}
See Appendix \ref{app:lemmaSkew}.
\end{proof}
%
%
\subsubsection{Passivity property}\label{sssec:passivity}
Following a similar analysis as in \cite{Spongbook}, the total energy of the Soft-PVTOL given by the sum of kinetic and potential energy is:
\begin{equation}
    H = \frac{1}{2}\dot{\textbf{q}}^{\intercal}D(\textbf{q}) \dot{\textbf{q}} + P(\textbf{q}).
\end{equation}
The first-time derivative of $H$ along the trajectories of the Soft-PVTOL considering the skew-symmetry property \eqref{eq:skew} is,
\begin{equation}
    \dot{H} = \dot{\textbf{q}}^{\intercal} \tau.
\end{equation}
Integrating the last equation, the passivity property follows.
%
%
%
\section{Control}\label{sec:control}
The primary advantage of the Soft-PVTOL over its conventional counterpart lies in its exceptional maneuverability. Its ability to decouple attitude and position dynamics is particularly noteworthy, overcoming the inherent underactuation in conventional PVTOL systems. Additionally, deflecting its arms allows it to apply forces and torques to the environment with greater agility than traditional PVTOL designs. Thus, one desires to decouple the attitude $\theta$ from its position $(x_v,z_v)$ dynamics. Now, notice that the control input vector $\tau \in \mathbb{R}^5$ has four actuators given by:
\begin{equation}
\begin{aligned}
    &\textrm{Motors thrusts: } (T_l, T_r) \\
    &\textrm{Tendons for the arms: } (\tau_l, \tau_r),
\end{aligned}
\end{equation}
which correspond to the thrusts and the tendons for the left and right motors. Thus, we have four actuators to control three DOFs, and the system is overactuated.

For the control design, we should consider the following:
\begin{enumerate}
    \item We aim to decouple the orientation from the position, allowing for independent control of each, thus achieving heightened maneuverability in the Soft-PVTOL.
    \item In our control design process, we initially introduce a control law based on passivity properties widely investigated in \cite{Spongbook} for the control input $\tau$ in \eqref{eq:system_Euler-Lagrange}. Subsequently, employing a control allocation approach, we determine the reference values for the four actuators mentioned earlier.
\end{enumerate}
To achieve point 1) above, the Soft-PVTOL position will be indirectly controlled by $(q_l, q_r)$. So, we assume that $(q_l, q_r)$ are virtual controllers for the position dynamics.

We illustrate the concept of the proposed controller with a diagram, as shown in Fig. \ref{fig:controldiagram}.
\begin{figure*}[htb!]
\begin{center}
\includegraphics[width=0.9\textwidth]{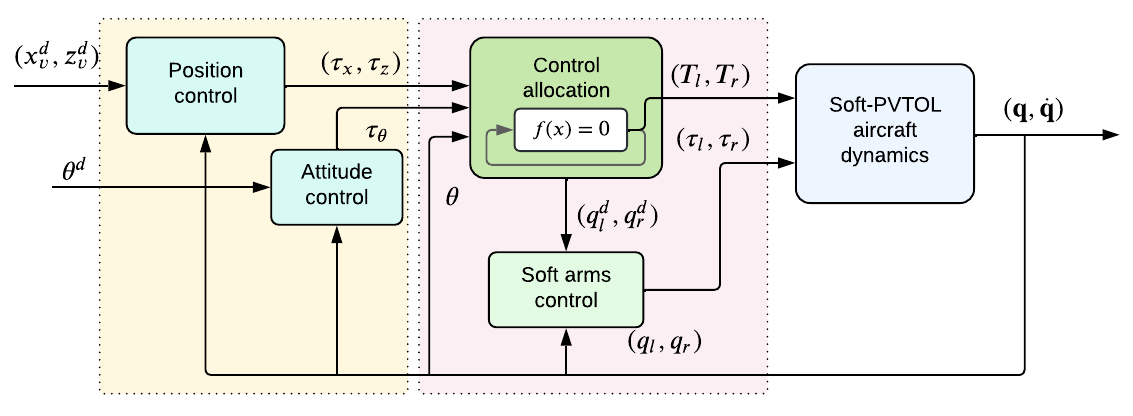}   
\caption{The block diagram illustrates the closed-loop system, where it is assumed that all system states $(\mathbf{q}, \dot{\mathbf{q}})$ are available for feedback—a standard assumption in aerial robot control.} 
\label{fig:controldiagram}
\end{center}
\end{figure*}

\begin{theorem}[Passivity-based control \cite{Spongbook}]\label{th:main}
Consider the Soft-PVTOL system modeled by equation \eqref{eq:system_Euler-Lagrange}. Also, consider Assumption \ref{ass:approx} and Lemmas \ref{lemm:matrix} and \ref{eq:SkewMatrix} together with the passivity property presented in Section \ref{sssec:passivity}. Let us consider the control law,
\begin{equation}\label{eq:passControl}
   \tau =  D(\mathbf{q})a + C(\mathbf{q}, \dot{\mathbf{q}})\upsilon + g(\mathbf{q}) - Kr,
\end{equation}
where 
\begin{equation}\label{eq:controlterms}
    \begin{aligned}
        \upsilon &= \dot{\mathbf{q}}^d - \Lambda \tilde{\mathbf{q}} \\
        a &= \ddot{\mathbf{q}}^d - \Lambda \dot{\tilde{\mathbf{q}}} \\
        r &= \dot{\mathbf{q}} - \upsilon
    \end{aligned}
\end{equation}
and $K,\Lambda$ are positive definite diagonal matrices. Then, the equilibrium $e=0$ of the tracking error 
\begin{equation}\label{eq:error__}
    e = \begin{pmatrix}
        \tilde{\mathbf{q}} \\ \dot{\tilde{\mathbf{q}}} 
    \end{pmatrix} = 
    \begin{pmatrix}
        \mathbf{q} - \mathbf{q}^d \\
        \dot{\mathbf{q}} - \dot{\mathbf{q}}^d
    \end{pmatrix}
\end{equation}
is globally exponentially stable.
\end{theorem}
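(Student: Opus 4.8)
The plan is to collapse the closed-loop dynamics into a first-order system in the composite error $r$ and then exhibit a single Lyapunov function that is negative definite in the full error state $e$. First I would rewrite the sliding-type variable in \eqref{eq:controlterms} as $r = \dot{\mathbf{q}} - \upsilon = \dot{\tilde{\mathbf{q}}} + \Lambda\tilde{\mathbf{q}}$, and note that $\dot{\upsilon} = a$, so $\dot{r} = \ddot{\mathbf{q}} - a$. Substituting the control law \eqref{eq:passControl} into the model \eqref{eq:system_Euler-Lagrange} and cancelling the gravity terms, the right-hand side becomes $D(\mathbf{q})a + C(\mathbf{q},\dot{\mathbf{q}})\upsilon - Kr$; grouping $D(\mathbf{q})(\ddot{\mathbf{q}}-a) + C(\mathbf{q},\dot{\mathbf{q}})(\dot{\mathbf{q}}-\upsilon)$ on the left yields the clean closed-loop error equation
\begin{equation}\label{eq:closedloop}
D(\mathbf{q})\dot{r} + C(\mathbf{q},\dot{\mathbf{q}})r + Kr = 0 .
\end{equation}

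Next I would propose the Lyapunov candidate
\begin{equation}\label{eq:lyap}
V = \tfrac{1}{2}r^{\intercal}D(\mathbf{q})r + \tilde{\mathbf{q}}^{\intercal}\Lambda K\tilde{\mathbf{q}} ,
\end{equation}
and differentiate along \eqref{eq:closedloop}. Using $D\dot{r} = -Cr - Kr$ together with the skew-symmetry of $\dot{D}-2C$ from Lemma \ref{eq:SkewMatrix}, the quadratic term $\tfrac{1}{2}r^{\intercal}(\dot{D}-2C)r$ vanishes and one is left with $\dot{V} = -r^{\intercal}Kr + 2\tilde{\mathbf{q}}^{\intercal}\Lambda K\dot{\tilde{\mathbf{q}}}$. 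Substituting $r = \dot{\tilde{\mathbf{q}}} + \Lambda\tilde{\mathbf{q}}$ and exploiting the fact that the diagonal matrices $K$ and $\Lambda$ commute, the indefinite cross terms cancel exactly, giving
\begin{equation}\label{eq:Vdot}
\dot{V} = -\dot{\tilde{\mathbf{q}}}^{\intercal}K\dot{\tilde{\mathbf{q}}} - \tilde{\mathbf{q}}^{\intercal}\Lambda K\Lambda\tilde{\mathbf{q}} ,
\end{equation}
which is negative definite in $e = (\tilde{\mathbf{q}}^{\intercal}, \dot{\tilde{\mathbf{q}}}^{\intercal})^{\intercal}$.

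To upgrade this from asymptotic to global exponential stability, I would write both \eqref{eq:lyap} and \eqref{eq:Vdot} as quadratic forms in $e$ through the linear relation $r = \Lambda\tilde{\mathbf{q}} + \dot{\tilde{\mathbf{q}}}$. One checks that $V=0$ forces $r=0$ and $\tilde{\mathbf{q}}=0$, hence $\dot{\tilde{\mathbf{q}}}=0$, so the associated matrix is positive definite pointwise. Invoking Lemma \ref{lemm:matrix} for the positivity of $D(\mathbf{q})$ and the boundedness of its entries noted below \eqref{eq:inertia}, I would establish uniform constants $c_1,c_2,c_3>0$ with $c_1\|e\|^2 \le V \le c_2\|e\|^2$ and $\dot{V} \le -c_3\|e\|^2$. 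These bounds give $\dot{V} \le -(c_3/c_2)V$, so $V(t) \le V(0)e^{-(c_3/c_2)t}$ and therefore $\|e(t)\| \le \sqrt{c_2/c_1}\,\|e(0)\|e^{-(c_3/2c_2)t}$, proving that $e=0$ is globally exponentially stable.

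The step I expect to be the main obstacle is securing the \emph{uniform} bounds on $D(\mathbf{q})$, not merely the pointwise positivity from Lemma \ref{lemm:matrix}. Global exponential stability requires constants $\lambda_m I \preceq D(\mathbf{q}) \preceq \lambda_M I$ holding for every admissible curvature, including the degenerate limits $q_l,q_r \to 0$ and the large-curvature regime $q_l,q_r \to \infty$. The upper bound follows since each entry of $D(\mathbf{q})$ is a bounded function of the curvatures — the $1/q_i^{n}$ contributions admit finite limits as $q_i\to 0$ and decay as $q_i\to\infty$. The lower bound is subtler: one must argue that $\lambda_{\min}(D(\mathbf{q}))$ does not degenerate to zero as the coupling entries $d_{14},d_{24},\dots$ vanish at large curvature, which can be handled by observing that $D$ there approaches a positive-definite diagonal limit and appealing to continuity of $\lambda_{\min}(\cdot)$ over the (compactified) curvature domain.
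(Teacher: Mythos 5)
Your proposal is correct and takes essentially the same route as the paper: the same closed-loop equation $D(\mathbf{q})\dot r + C(\mathbf{q},\dot{\mathbf{q}})r + Kr = 0$, the same Lyapunov function $V = \tfrac{1}{2}r^{\intercal}D(\mathbf{q})r + \tilde{\mathbf{q}}^{\intercal}\Lambda K \tilde{\mathbf{q}}$, skew-symmetry to eliminate the $\dot D - 2C$ term, and quadratic-form bounds yielding $\dot V \le -\rho V$ and hence global exponential stability. The only cosmetic difference is that you express both quadratic forms in the coordinates $e = (\tilde{\mathbf{q}},\dot{\tilde{\mathbf{q}}})$ while the paper works in $z = (r,\tilde{\mathbf{q}})$ and checks positive definiteness of the resulting block matrix via its determinant; your explicit attention to the uniform lower and upper bounds on $D(\mathbf{q})$ — which the paper secures through Assumption \ref{ass:approx} confining $(q_l,q_r)$ to a compact set — is, if anything, slightly more careful than the published argument.
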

%
\begin{proof}
First, we substitute the control law \eqref{eq:passControl} into the Soft-PVTOL system \eqref{eq:system_Euler-Lagrange} to obtain the closed-loop representation:
\begin{equation}
D(\textbf{q})\dot{r} + C(\textbf{q}, \dot{\textbf{q}})r + Kr = 0.
\end{equation}
Then, we propose the Lyapunov function candidate, 
\begin{equation}\label{eq:lyap}
    V(r,\tilde{q}) = \frac{1}{2}r^{\intercal}D(\textbf{q})r + \tilde{q}^{\intercal} \Lambda K \tilde{q}
\end{equation}
and we evaluate the closed-loop system trajectories along $V$ to get,
\begin{equation}
    \dot{V} = -r^{\intercal} Kr +2\tilde{q}^{\intercal}\Lambda K \dot{\tilde{q}} + \frac{1}{2} r^{\intercal}\left( \dot{D}(\textbf{q}) - 2 C(\mathbf{q}, \dot{\mathbf{q}}) \right) r.
\end{equation}
The last equation can be reduced by claiming Lemma \ref{eq:SkewMatrix} and the definition of $r$ in \eqref{eq:controlterms}. Thus, it follows that,
\begin{equation}\label{eq:lyapn_finall}
    \dot{V} = - \tilde{q}^{\intercal} \Lambda^{\intercal} K \Lambda \tilde{q} - \dot{\tilde{q}}^{\intercal} K \dot{\tilde{q}} = -e^{\intercal}\begin{pmatrix}
        \Lambda^{\intercal} K \Lambda & 0 \\
        0 & K
        \end{pmatrix}e.
\end{equation}
To prove exponential stability, consider the following analysis. Considering Assumption \ref{ass:approx}, the inertia matrix $D(\textbf{q})$ in \eqref{eq:inertia_new} has constant norm bounds $\Vert D(\textbf{q}) \Vert \leq \alpha \in \mathbb{R}^+$ for all feasible values of $q_l$ and $q_r$ given by \eqref{eq:bounds_ql_qr}. Notice that $V(r,\tilde{q})$ in \eqref{eq:lyap} can be expressed as,
\begin{equation}
    V(r,\tilde{q}) = - \begin{pmatrix}
            r^{\intercal} & \tilde{q}^{\intercal}
        \end{pmatrix}\underbrace{\begin{pmatrix}
        \frac{1}{2}D(\textbf{q}) & 0 \\
        0 & \Lambda K 
        \end{pmatrix}}_{P}\begin{pmatrix}
            r \\ \tilde{q}       
        \end{pmatrix}.
\end{equation}
Thus, let us now represent $\dot{V}(\dot{\tilde{q}}, \tilde{q})$ in \eqref{eq:lyapn_finall} in the coordinates $(r,\tilde{q})$ using \eqref{eq:controlterms} and \eqref{eq:error__} by substituting $\dot{\tilde{q}} = r - \Lambda \tilde{q}$ in \eqref{eq:lyapn_finall}:
\begin{equation}
\begin{aligned}
    \dot{V}(r,\tilde{q}) &= - \tilde{q}^{\intercal} \Lambda^{\intercal} K \Lambda \tilde{q} - \left( r^{\intercal} - \tilde{q}^{\intercal} \Lambda^{\intercal} \right) K \left( r - \Lambda \tilde{q} \right) \\
    &= -r^{\intercal} K r - \tilde{q}^{\intercal}\Lambda^{\intercal} K \Lambda \tilde{q} + 2r^{\intercal} K \Lambda \tilde{q} \\
    &= - \begin{pmatrix}
            r^{\intercal} & \tilde{q}^{\intercal}
        \end{pmatrix}\underbrace{\begin{pmatrix}
        K & K \Lambda \\
        K \Lambda & 2\Lambda^{\intercal} K \Lambda
        \end{pmatrix}}_{Q}\begin{pmatrix}
            r \\ \tilde{q}       
        \end{pmatrix}.
\end{aligned}
\end{equation}
We verify that the block matrix $Q$ in the above equation is positive definite by noting that all the principal leading minors are positive. Since $K$ is positive definite by definition, we must verify that $\det{(Q)} >0$:
\begin{equation}
\begin{aligned}
    \det{(Q)} &= \det{(K)}\det{\left( 2\Lambda^{\intercal} K \Lambda - K \Lambda K^{-1} K \Lambda \right)} \\
    &= \det{(K)}\det{(\Lambda)} \det{(\Lambda K)},
\end{aligned}
\end{equation}
which is clearly positive since $K$ and $\Lambda$ are positive definite diagonal matrices. Now, notice that,
\begin{equation}
    \dot{V} = -z^{\intercal}Qz \leq -\lambda_{\textrm{min}}\{ Q \} z^{\intercal} z \leq - \frac{\lambda_{\textrm{min}}\{ Q \}}{\lambda_{\textrm{max}}\{ P \}} z^{\intercal}Pz = - \rho V
\end{equation}
where $z = \begin{pmatrix} r & \tilde{q} \end{pmatrix}^{\intercal}$, and $\rho = \frac{\lambda_{\textrm{min}}\{ Q \}}{\lambda_{\textrm{max}}\{ P \}} >0$. Therefore, given that $V$ is radially unbounded and $\dot{V} \leq - \rho V$ holds, it implies the global exponential stability of the equilibrium point $e=0$.
\end{proof}
\subsection{Control allocation}\label{ssec:alloc}
Putting together \eqref{eq:pos_control} and \eqref{eq:att_control_input}, results in,
\begin{equation}\label{eq:taus_rot}
\begin{aligned}
    \begin{pmatrix}
        \tau_x \\ \tau_z \\ \tau_{\theta}
    \end{pmatrix} &=
    \begin{pmatrix}
        \cos{\theta} & - \sin{\theta} & 0 \\
        \sin{\theta} & \cos{\theta} & 0 \\
        0            & 0            & 1
    \end{pmatrix} \times \\
    &\hspace{5mm} \underbrace{\begin{pmatrix}
    -T_r\sin{q_r} + T_l\sin{q_l} \\
    T_r\cos{q_r} + T_l\cos{q_l} \\   l_r   T_r \cos{q_r}   \frac{\sin{q_r}}{q_r} -  l_l   T_l\cos{q_l}    \frac{\sin{q_l}}{q_l}
    \end{pmatrix}}_{(\upsilon_x , \upsilon_z , \tau_{\theta})^{\intercal}}.
\end{aligned}
\end{equation}
To recover the real control inputs $(T_r, T_l)$ and the desired angles $(q_r,q_l)$, we proceed as follows. First, notice that the vector $\mathbf{u} = (\upsilon_x , \upsilon_z , \tau_{\theta})^{\intercal}$ can be rewritten as:
\begin{equation}\label{eq:ctrls_alloc}
    \underbrace{\begin{pmatrix}
        \upsilon_x \\ \upsilon_z \\ \tau_{\theta}
    \end{pmatrix}}_{\mathbf{u}} = 
    \underbrace{\begin{pmatrix}
    -1 & 0 & 1 & 0  \\
    0 & 1 & 0 & 1  \\
    0 & l_r\frac{\sin{q_r}}{q_r} &  0 & -l_l\frac{\sin{q_l}}{q_l}
    \end{pmatrix}}_{A}
    \underbrace{\begin{pmatrix}
        T_r\sin{q_r} \\ T_r\cos{q_r} \\ T_l\sin{q_l} \\ T_l\cos{q_l}
    \end{pmatrix}}_{\mathbf{v}}.
\end{equation}
Using the Moore–Penrose inverse and after some computations, we get:
\begin{equation}
    \mathbf{v} = A^{+}\mathbf{u}
\end{equation}
where
\begin{equation}
    A^{+} =  
    \begin{pmatrix}
        -\frac{1}{2} & 0 & 0 \\
        0 &  \frac{l_lq_r\sin{q_l}}{l_lq_r\sin{q_l} + l_rq_l\sin{q_r}}  & \frac{q_lq_r}{l_lq_r\sin{q_l} + l_rq_l\sin{q_r}} \\
        \frac{1}{2} & 0 & 0 \\
        0 & \frac{l_rq_l\sin{q_r}}{l_lq_r\sin{q_l} + l_rq_l\sin{q_r}} & - \frac{q_lq_r}{l_lq_r\sin{q_l} + l_rq_l\sin{q_r}}
    \end{pmatrix},
\end{equation}
and we noticed that such an equation is not well defined when $q_l=q_r=0$, which will throw numerical problems when one tries to solve it for $(T_l, T_r, q_l, q_r)$. To avoid that situation, let us make the following assumption.
\begin{assumption}\label{ass:approx}
The Soft-PVOL will achieve bounded angles $(q_l,q_r)$ such that,
\begin{equation}\label{eq:bounds_ql_qr}
    \vert q_l \vert \leq \pi, \qquad \vert q_r \vert \leq \pi.
\end{equation}
Moreover, in that region, we approximate:
\begin{equation}
    \frac{\sin{q_l}}{q_l} 	\approx \cos{\left(\frac{q_l}{2} \right)}, \qquad \frac{\sin{q_r}}{q_r} 	\approx \cos{\left(\frac{q_r}{2}\right)}.
\end{equation}
\end{assumption}
The approximation claimed in Assumption \ref{ass:approx} makes sense since, in practical situations, the angles $(q_l,q_r)$ hardly reach angles near to $\pi$. The approximation can be seen graphically in the Fig. \ref{fig:approx}.
\begin{figure}[tb!]
\begin{center}
\includegraphics[width=\columnwidth]{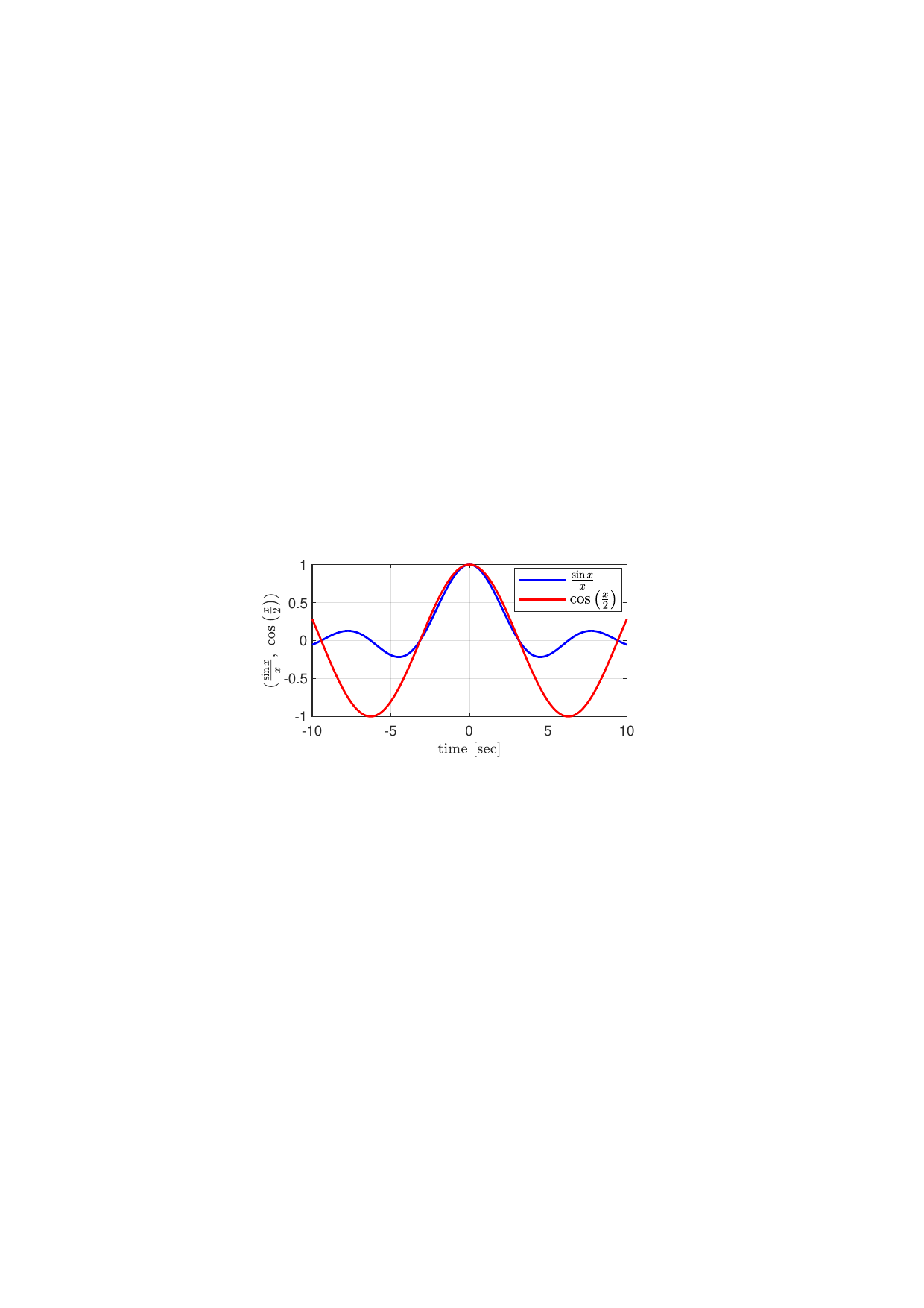}
\caption{Function approximation described in Assumption \ref{ass:approx}.} 
    \label{fig:approx}%
\end{center}
\end{figure}
Thus, considering Assumption \ref{ass:approx} we describe \eqref{eq:ctrls_alloc} as follows, 
\begin{equation}
    \mathbf{u} = \bar{A}\mathbf{v}
\end{equation}
where
\begin{equation}
    \bar{A} = 
    \begin{pmatrix}
    -1 & 0 & 1 & 0  \\
    0 & 1 & 0 & 1  \\
    0 & l_r\cos{\left(\frac{q_r}{2} \right)} &  0 & -l_l\cos{\left(\frac{q_l}{2}\right)}
    \end{pmatrix},
\end{equation}
and using the Moore-Penrose inverse, it results in,
\begin{equation}\label{eq:TrTlangles}
\begin{aligned}
    \mathbf{v} &= \bar{A}^{+}\mathbf{u} \\
    &= \begin{pmatrix}
-\frac{1}{2}\upsilon_x \\
\frac{l_l \cos\left(\frac{q_l}{2}\right)}{l_l \cos\left(\frac{q_l}{2}\right) + l_r \cos\left(\frac{q_r}{2}\right)}\upsilon_z + \frac{1}{l_l \cos\left(\frac{q_l}{2}\right) + l_r \cos\left(\frac{q_r}{2}\right)}\tau_{\theta} \\
\frac{1}{2}\upsilon_x \\
\frac{l_r \cos\left(\frac{q_r}{2}\right)}{l_l \cos\left(\frac{q_l}{2}\right) + l_r \cos\left(\frac{q_r}{2}\right)}\upsilon_z - \frac{1}{l_l \cos\left(\frac{q_l}{2}\right) + l_r \cos\left(\frac{q_r}{2}\right)}\tau_{\theta}
\end{pmatrix}
\end{aligned}
\end{equation}
where
\begin{equation}
\bar{A}^{+} = \begin{pmatrix}
-\frac{1}{2} & 0 & 0 \\
0 & \frac{l_l \cos\left(\frac{q_l}{2}\right)}{l_l \cos\left(\frac{q_l}{2}\right) + l_r \cos\left(\frac{q_r}{2}\right)} & \frac{1}{l_l \cos\left(\frac{q_l}{2}\right) + l_r \cos\left(\frac{q_r}{2}\right)} \\
\frac{1}{2} & 0 & 0 \\
0 & \frac{l_r \cos\left(\frac{q_r}{2}\right)}{l_l \cos\left(\frac{q_l}{2}\right) + l_r \cos\left(\frac{q_r}{2}\right)} & -\frac{1}{l_l \cos\left(\frac{q_l}{2}\right) + l_r \cos\left(\frac{q_r}{2}\right)}
\end{pmatrix}
\end{equation}
where the RHS of \eqref{eq:TrTlangles} is well defined even in $q_l=q_r=0$. And from \eqref{eq:taus_rot} notice that,
\begin{equation}
    \begin{pmatrix}
        \upsilon_x \\ \upsilon_z \\ \tau_{\theta}
    \end{pmatrix} = 
        \begin{pmatrix}
        \cos{\theta} & - \sin{\theta} & 0 \\
        \sin{\theta} & \cos{\theta} & 0 \\
        0            & 0            & 1
    \end{pmatrix}^{\intercal}
    \begin{pmatrix}
        \tau_x \\ \tau_z \\ \tau_{\theta}
    \end{pmatrix} 
\end{equation}
and then
\begin{equation}\label{eq:vxvz}
    \begin{aligned}
        \upsilon_x &= \tau_x\cos{\theta} + \tau_z\sin{\theta}  \\
        \upsilon_z &= -\tau_x\sin{\theta} + \tau_z \cos{\theta}.
    \end{aligned}
\end{equation}
Now, using \eqref{eq:TrTlangles} we solve for $(T_l,T_r,q_l,q_r)$ using simple trigonometric functions $\cos^2{x} + \sin^2{x}=1$, and $\tan{x}=\frac{\sin{x}}{\cos{x}}$, as follows,
\begin{equation}\label{eq:control_inputs_des}
    \begin{aligned}
    T_l &= \sqrt{ \left( \frac{1}{2}\upsilon_x \right)^2 +  \left( \frac{l_r\cos{ \left(\frac{q_r}{2} \right)} \upsilon_z - \tau_{\theta}}{l_l \cos\left(\frac{q_l}{2}\right) + l_r \cos\left(\frac{q_r}{2}\right)} \right)^2   } \\
    T_r &= \sqrt{ \left( \frac{1}{2}\upsilon_x \right)^2 +  \left( \frac{l_l\cos{ \left(\frac{q_l}{2} \right)} \upsilon_z + \tau_{\theta}}{l_l \cos\left(\frac{q_l}{2}\right) + l_r \cos\left(\frac{q_r}{2}\right)} \right)^2   } \\
    q_l &= \arctan{ \left(  \frac{\left( \upsilon_x \right) \left( l_l \cos\left(\frac{q_l}{2}\right) + l_r \cos\left(\frac{q_r}{2}\right) \right)}{ \left( 2 \right) \left( l_r\cos{ \left(\frac{q_r}{2} \right)} \upsilon_z - \tau_{\theta} \right) } \right)} \\
    q_r &= \arctan{ \left( - \frac{\left( \upsilon_x \right) \left( l_l \cos\left(\frac{q_l}{2}\right) + l_r \cos\left(\frac{q_r}{2}\right) \right)}{ \left( 2 \right) \left( l_l\cos{ \left(\frac{q_l}{2} \right)} \upsilon_z + \tau_{\theta} \right) } \right)}.
    \end{aligned} 
\end{equation}
It is noteworthy that the equations above, coupled with \eqref{eq:vxvz}, necessitate numerical methods for solution due to their nonlinear nature and dependence on $(q_l,q_r)$. \textit{These $(q_l,q_r)$ angles will henceforth be referred to as the desired angles for the controller, denoted by $(q_l^d,q_r^d)$}. Thus, we solve \eqref{eq:control_inputs_des} numerically as,
\begin{equation}\label{eq:solve_num}
\begin{aligned}
    f(x) &=  
    \begin{pmatrix}
    T_l - \sqrt{ \left( \frac{1}{2}\upsilon_x \right)^2 +  \left( \frac{l_r\cos{ \left(\frac{q_r^d}{2} \right)} \upsilon_z - \tau_{\theta}}{l_l \cos\left(\frac{q_l^d}{2}\right) + l_r \cos\left(\frac{q_r^d}{2}\right)} \right)^2   } \\
    T_r - \sqrt{ \left( \frac{1}{2}\upsilon_x \right)^2 +  \left( \frac{l_l\cos{ \left(\frac{q_l^d}{2} \right)} \upsilon_z + \tau_{\theta}}{l_l \cos\left(\frac{q_l^d}{2}\right) + l_r \cos\left(\frac{q_r^d}{2}\right)} \right)^2   } \\
    q_l^d - \arctan{ \left(  \frac{\left( \upsilon_x \right) \left( l_l \cos\left(\frac{q_l^d}{2}\right) + l_r \cos\left(\frac{q_r^d}{2}\right) \right)}{ \left( 2 \right) \left( l_r\cos{ \left(\frac{q_r^d}{2} \right)} \upsilon_z - \tau_{\theta} \right) } \right)} \\
    q_r^d - \arctan{ \left( - \frac{\left( \upsilon_x \right) \left( l_l \cos\left(\frac{q_l^d}{2}\right) + l_r \cos\left(\frac{q_r^d}{2}\right) \right)}{ \left( 2 \right) \left( l_l\cos{ \left(\frac{q_l^d}{2} \right)} \upsilon_z + \tau_{\theta} \right) } \right)}
    \end{pmatrix} \\
    &= 0
    \end{aligned}
\end{equation}
where $x = (T_l, T_r, q_l^d, q_r^d)$.
%
\section{Simulations}\label{sec:sims}
The simulation experiments were performed using the Matlab Simulink ode5 (Dormand-Prince) solver with a fixed-step size of $0.01$ seconds. To solve \eqref{eq:solve_num}, we use the Algebraic block of Matlab Simulink with initial guess $x= [20, 10, 0.4, -0.2]$.

The system parameters are: $m = 5$, $m_l = m_r = \frac{1}{5}m$, $I=1$, $I_l = I_r = 0.1$, $l_l = l_r = 0.5$, and $g = 9.8$.

The control parameters are: $k_{px} = 3$, $k_{dx} = 0.85k_{px}$, $k_{pz} = 3$, $k_{dz} = 0.85k_{pz}$, $k_{p\theta} = 15$, $k_{d\theta} = 0.7k_{p\theta}$, $k_{pq_{l}} = 30$, $k_{dq_{l}} = 0.7k_{pq_{l}}$, and $k_{pq_{r}} = 30$, $k_{dq_{r}} = 0.7k_{pq_{r}}$. $K_P = \diag{(k_{px} , \ k_{pz} , \ k_{p\theta} , \ k_{pq_{l}} , \ k_{pq_{r}} )}$, $K_D = \diag{(k_{dx} , \ k_{dz} , \ k_{d\theta} , \ k_{dq_{l}} , \ k_{dq_{r}} )}$. Also, $\lambda_1 = 1$, $\lambda_2 = 1$, $\lambda_3 = 5$, $\lambda_4 = 10$, $\lambda_5 = 10$, and $\Lambda = \diag{(\lambda_1 , \ \lambda_2 , \ \lambda_3 , \ \lambda_4 , \ \lambda_5)}$. 

The initial conditions are 
\begin{equation}
\begin{aligned}
        \mathbf{q}(0) &= 
    \begin{pmatrix}
        x_v(0) & z_v(0) & \theta(0) & q_l(0) & q_r(0)
    \end{pmatrix}^{\intercal} \\
    &= \begin{pmatrix}
        5 & 0 & 0.2\pi & 0.01\pi & -0.15\pi 
    \end{pmatrix}^{\intercal} \\
    \dot{\mathbf{q}}(0) &= 
    \begin{pmatrix}
        \dot{x}_v(0) & \dot{z}_v(0) & \dot{\theta}(0) & \dot{q}_l(0) & \dot{q}_r(0)
    \end{pmatrix}^{\intercal} \\
    &= \begin{pmatrix}
        0 & 0 & 0 & 0 & 0 
    \end{pmatrix}^{\intercal}.
\end{aligned}
\end{equation}
The desired trajectories for the $(x_v-z_v)$ coordinates and the pitch angle $\theta$ are given by:
\begin{equation}\label{eq:traj_des}
\begin{aligned}
    &x_v^d =  4\sin{\left(\frac{1}{2}t\right)}, \quad
    z_v^d = 3\cos{\left(-\frac{1}{10} t\right)} + 4, \quad
    \theta^d = 0 \\
    &\dot{x}_v^d =  2\cos{\left(\frac{1}{2}t\right)}, \quad
    \dot{z}_v^d = -0.3\sin{\left(-\frac{1}{10} t\right)} , \quad
    \dot{\theta}^d = 0.
 \end{aligned}   
\end{equation}
Notice that the Soft-PVTOL must track a time-varying $x_v-z_v$ trajectory depicted in Fig. \ref{fig:traj} while maintaining a roll angle equal to zero. To achieve that, the soft extremities must track a controlled trajectory. 
\begin{figure}[tb!]
\centering
\includegraphics[width=0.9\columnwidth]{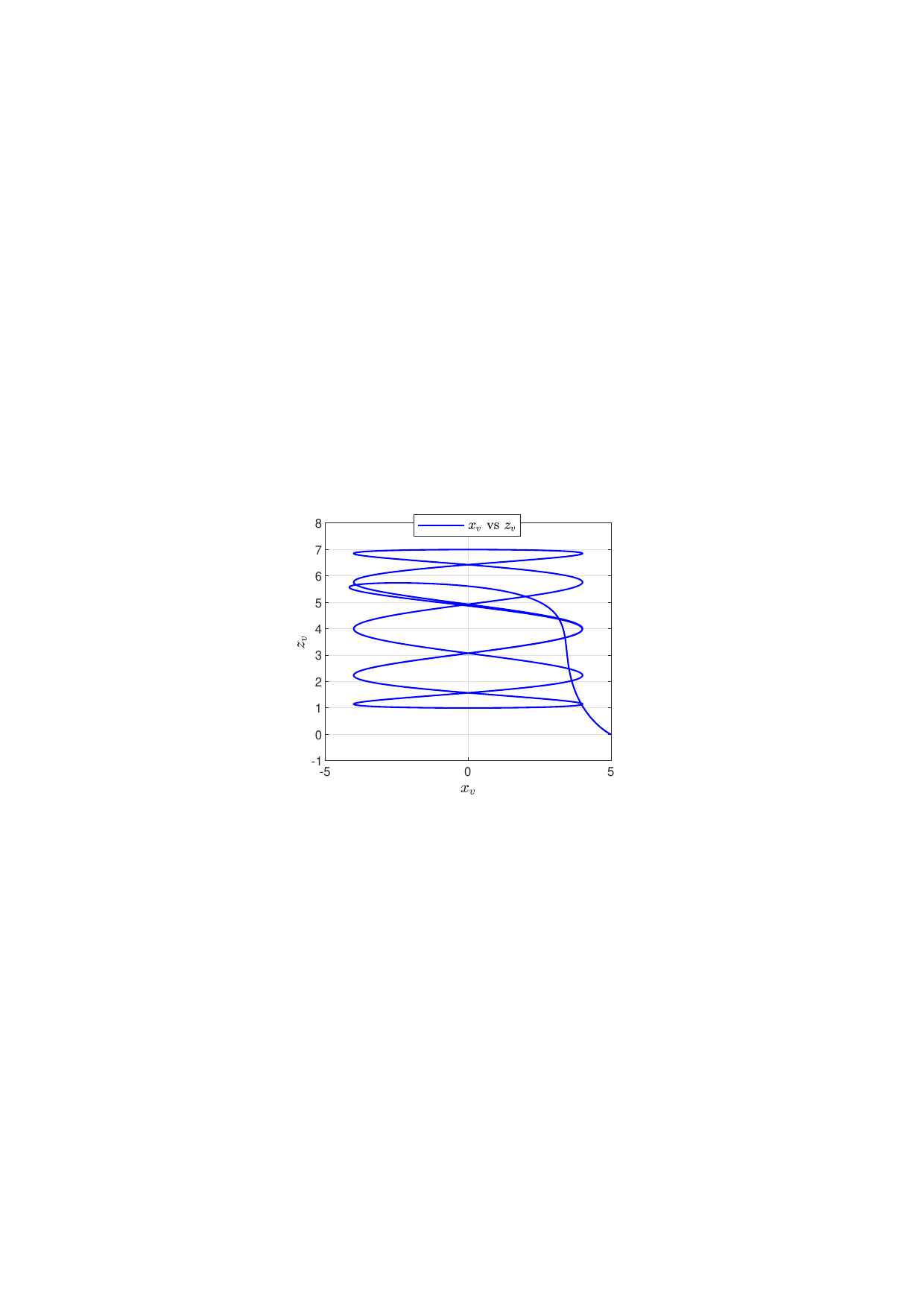}
\caption{The $x-z$ plot illustrates the 2D trajectory tracked by the Soft-PVTOL. It is important to note that while in a conventional PVTOL, the $x$ displacements are associated with tilting angles of $\theta$, in this particular configuration, the roll angle remains fixed at zero throughout.}
\label{fig:traj}
\end{figure} 
In this manner, it becomes possible to control both the position and the orientation of the Soft-PVTOL separately. This is different from conventional PVTOL and multi-rotors, where changing the position requires adjusting the orientation because of the underactuated nature of the PVTOL system. 

In Fig. \ref{fig:errors}, we observe the position error vector $\tilde{\mathbf{q}}(t)$ and the velocity error vector $\dot{\tilde{\mathbf{q}}}(t)$. Notably, all error states converge to zero exponentially, and the convergence rate can be adjusted by tuning the control gains. It is observed that the convergence of the error states $(\tilde{q}_{x_v}, \tilde{q}_{z_v})$ occurs slower than the rest of the states. This is expected due to a time-scale separation among the position, orientation, and arm dynamics subsystems, \cite{6580063}.
\begin{figure*}[htb!]
\centering
\includegraphics[width=0.75\textwidth]{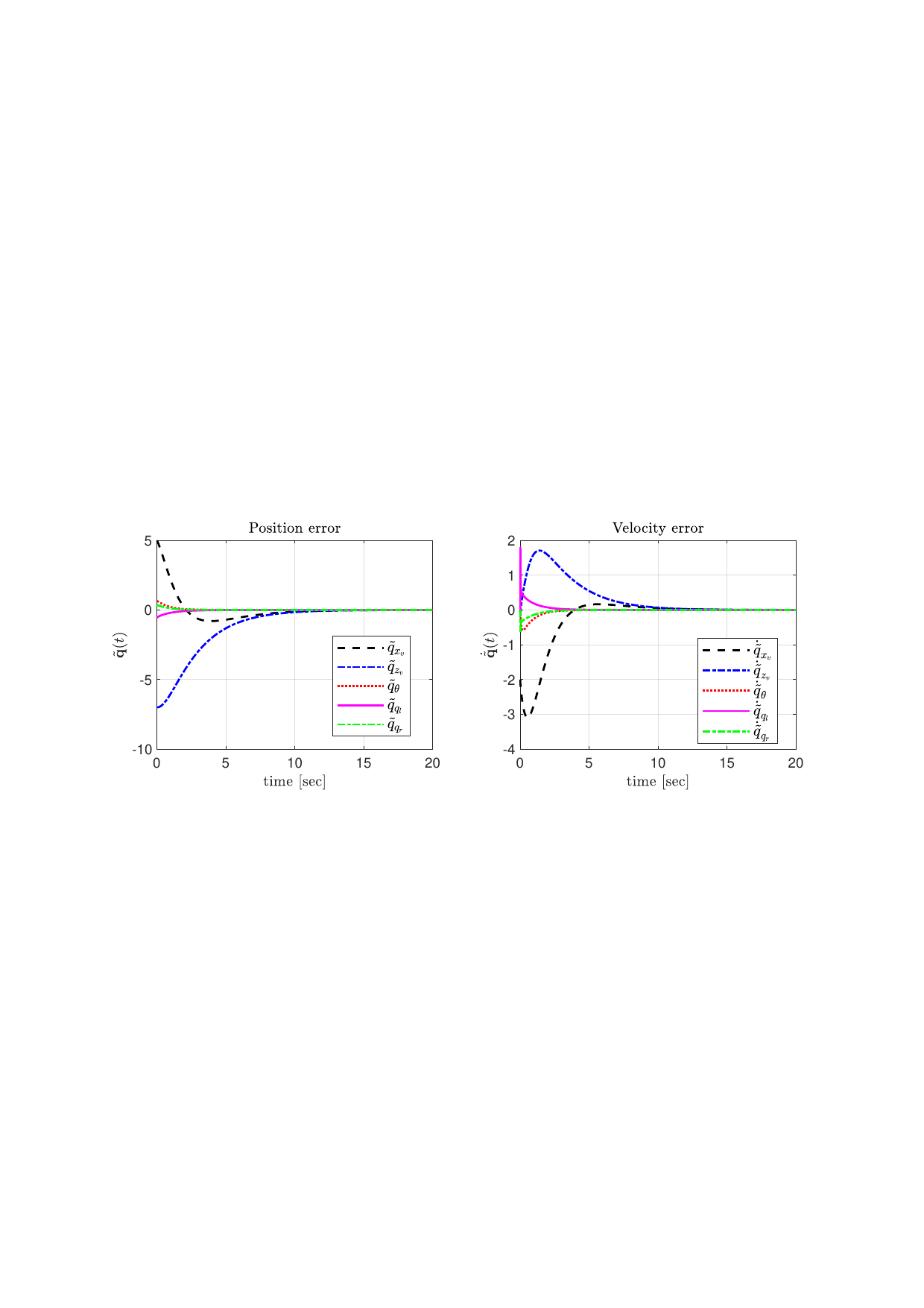}
\caption{Position and velocity error states in the closed-loop system.}
\label{fig:errors}
\end{figure*} 

For a clearer view of the Soft-PVTOL's convergence to the desired position outlined in \eqref{eq:traj_des}, we separately plot the actual position $(x_v, z_v)$ against its desired values $(x_v^d, z_v^d)$ and the roll angle $\theta$ against its desired value $\theta^d$. These are presented on the left-hand side of Fig. \ref{fig:pos_att}. Similarly, we depict the corresponding velocities on the right-hand side of the same figure. This figure makes it more evident how the position changes over time while the roll angle remains fixed at zero.
\begin{figure*}[htb!]
\centering
\includegraphics[width=0.75\textwidth]{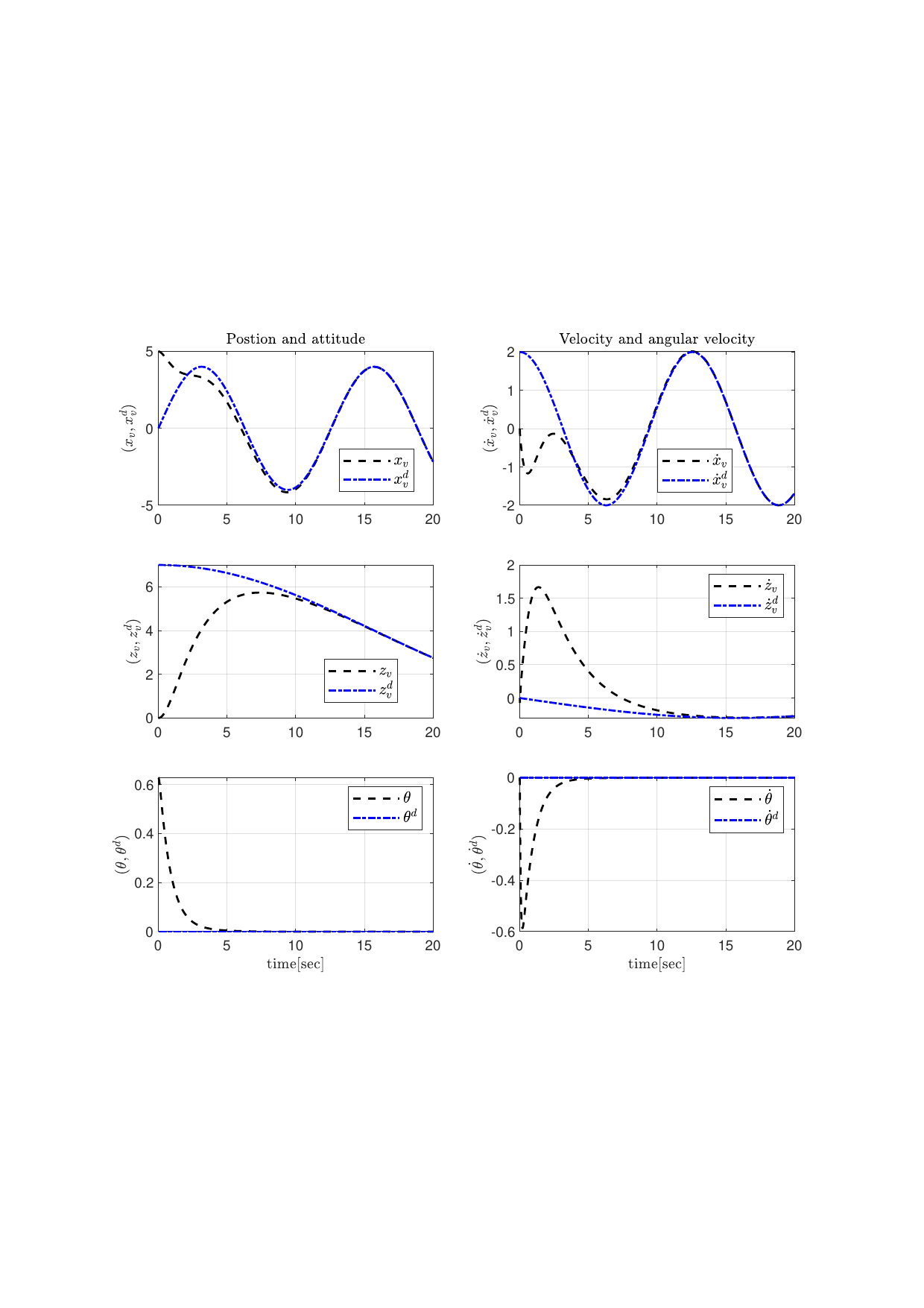}
\caption{The left-hand side displays the pose of the Soft-PVTOL in the closed-loop response, accompanied by reference values. Meanwhile, the right-hand side illustrates the velocities and angular velocities alongside their respective references.}
\label{fig:pos_att}
\end{figure*} 

Analyzing the Euler-Lagrange equations \eqref{eq:EL}, we observe that the $\tau \in \mathbb{R}^5$ vector encompasses all control inputs for the Soft-PVTOL aircraft. The components of this control input vector $\tau$ are illustrated in Fig. \ref{fig:control_inputs}. Notably, some elements of this control vector, such as $(\tau_x, \tau_z)$, are virtual, as discussed in Sections \ref{sssec:control_xz} and \ref{ssec:alloc}.

Upon examining Fig. \ref{fig:control_inputs}, it becomes apparent that $\tau_z$ significantly surpasses $\tau_x$, as anticipated. This discrepancy is expected, given that the primary force exertion is directed upwards to counteract the weight of the Soft-PVTOL aircraft.
\begin{figure*}[htb!]
\centering
\includegraphics[width=0.75\textwidth]{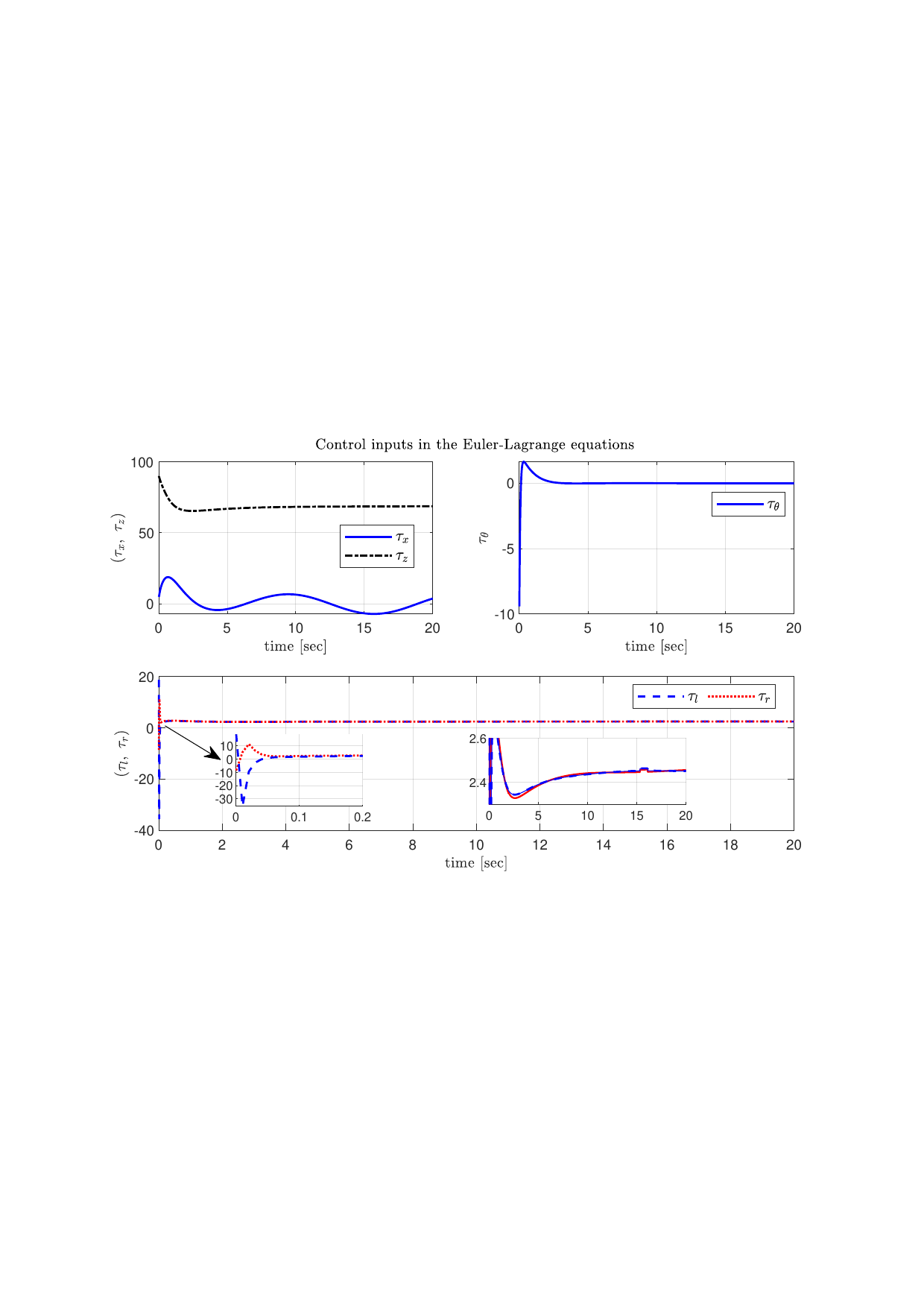}
\caption{The control inputs in the Euler-Lagrange equation \eqref{eq:EL}. In the Euler-Lagrange equation, the control inputs for the left and right-hand side arms, denoted as $(\tau_l, \tau_r)$, exhibit oscillations around $2.5$ to attain the desired curvature angles. An interesting observation is the minor variation around $t=16$ seconds. This arises from computing limits when $q_l$ or $q_r$ approach zero to mitigate singularity complexities in the simulation; see Section \ref{ssec:singular} for details of it.}
\label{fig:control_inputs}
\end{figure*} 

Once the control values $\tau$ are determined using Theorem \ref{th:main}, we proceed to derive the actual control inputs: the motor thrusts $(T_l, T_r)$ and the desired curvature angles for the arms $(q_l^d, q_r^d)$. These are illustrated in Fig. \ref{fig:thrusts_qlqr}. Additionally, within the same figure, we observe the convergence of the arm curvature angles $q_l$ and $q_r$ to their desired values $q_l^d$ and $q_r^d$, respectively. It is important to recall that $(T_l, T_r, q_l^d, q_r^d)$ are calculated as outlined in \eqref{eq:control_inputs_des}. Specifically, we solve \eqref{eq:solve_num} in Matlab Simulink using the \textit{algebraic constraint} block with the initial guess described at the beginning of this section.
\begin{figure*}[htb!]
\centering
\includegraphics[width=0.75\textwidth]{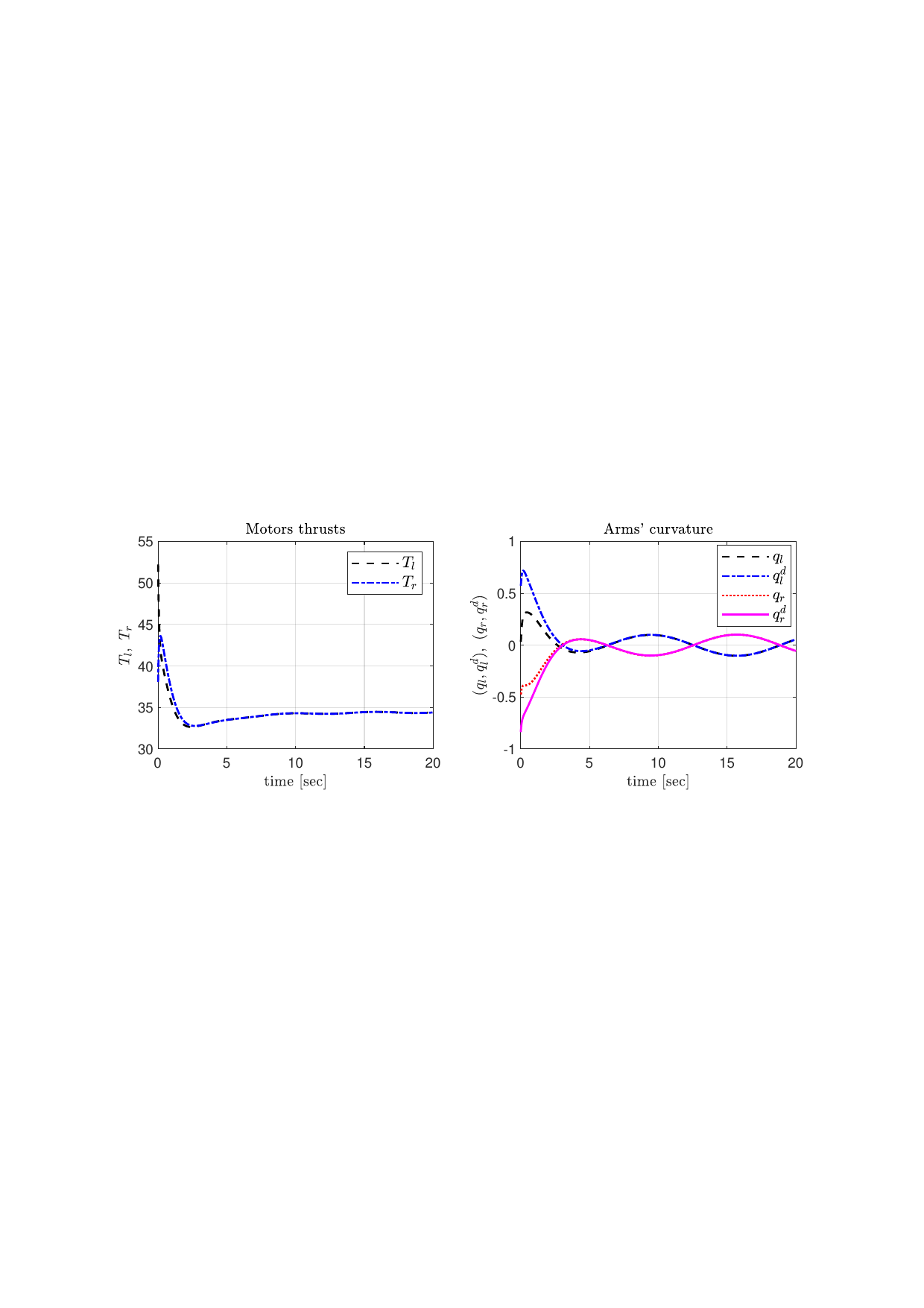}
\caption{The thrusts generated by the left and right motors are illustrated on the left-hand side, while the curvature and their corresponding desired values for the left and right soft arms are depicted on the right-hand side.}
\label{fig:thrusts_qlqr}
\end{figure*} 

\subsection{Avoiding numerical issues}\label{ssec:singular}
While the system itself does not contain any singularities, the numerical simulation can encounter errors due to the computational approach used by the simulator.

For instance, consider the entries of the matrix  $D(\mathbf{q})$. Throughout the inertia matrix $D(\mathbf{q})$, and indeed in the rest of the matrices of the system, various terms are multiplied by $\frac{1}{q_l^n}$ or $\frac{1}{q_r^n}$, where $n = \{1, \cdots , 5 \}$. In Matlab Simulink, such terms can lead to errors in the simulation because the equations are computed numerically, term by term, rather than analytically.

To address this issue and avoid numerical errors, we analytically compute the limit of each term as $q_l$ and $q_r$ approach zero. For example, in \eqref{eq:inertia_new}, we compute the limit of the entries of the matrix $D(\mathbf{q})$ as $q_l$ and $q_r$ approach zero, specifically $\lim_{q_l\to 0 }\mathscr{D}_{1} = 0$. We then substitute this calculated value directly into the simulation when $\vert q_l \vert < \delta$, where $\delta$ is a small positive number. In this simulation, we set $\delta = 0.1$. Outside of this region, $\mathscr{D}_{1}$ retains its usual value as described in $D(\mathbf{q})$.

This approach ensures that the simulation accurately handles terms involving small values of $q_l$ and $q_r$, preventing numerical errors that may arise otherwise.
%
%
\section{Conclusions and future directions}\label{sec:conc}
In this paper, we have presented the Soft-PVTOL aircraft for the first time, which is the version of the PVTOL with soft arms. We have demonstrated that the mathematical model of such an aerial robot can be represented by the Euler-Lagrange equations assuming constant curvature in the soft arms of the robot. Obtaining a concise mathematical model with the Euler-Lagrange approach opens the door to propose different well-known and mature control strategies to stabilize the error dynamics of the Soft-PVTOL. Moreover, the model can be extended to a quadrotor or even more complex multi-rotor aerial soft robots since a multi-rotor is an extension of a PVTOL; such property remains in the Soft-PVTOL. It is interesting that despite the complexity of the model, it does not present any singularities for all the possible curvature values in the left and right soft arms.
%

In contrast to conventional PVTOL and multi-rotor systems, where adjustments in position invariably influence orientation due to inherent underactuation, Soft-PVTOL distinguishes itself by releasing this characteristic. This distinctive feature heralds a new era of maneuverability and precision and signifies a substantial leap forward in aerial robotics technology. We have shown that including soft arms renders the complete system fully actuated, unlike the conventional PVTOL. Leveraging this characteristic, we utilize the curvature of the arms as a virtual controller to attain arbitrary feasible desired positions and orientations. Consequently, the Soft-PVTOL successfully decouples orientation from position dynamics, allowing for the independent tracking of trajectories for $(x_v,z_v)$ and $\theta$. Unlike traditional methods that approximate soft arms using a series of rigid links, our model captures the fluidity of motion inherent to soft robotics, preserving both the structural flexibility and control precision necessary for aerial systems.

We have validated the passivity property for the Soft-PVTOL and leveraged it by implementing a passivity-based controller. Nevertheless, intriguing control challenges persist for this category of soft aerial robots, such as robust and adaptive control strategies. While our control strategy leverages state-of-the-art methods, a key contribution lies in the novel application of the Euler-Lagrange framework to model the Soft-PVTOL system. This approach ensures the integration of well-established control techniques while maintaining the intrinsic softness and continuous deformation of the robot’s arms.

The simulation results exhibit the convergence of all system states, validating the effectiveness of the proposed approach. However, a challenge arose with Matlab Simulink's computation methodology during the numerical simulation. Simulink calculates expressions term by term, leading to singularities when the curvature of one arm reaches zero. Nevertheless, analytically, we have demonstrated that such singularities do not exist. To address this discrepancy, we analytically computed the limit of each entry in the Euler-Lagrange equations matrices as the curvature of the arms approaches zero. This enabled us to successfully simulate the closed-loop system, demonstrating the exponential convergence of all error states.

While the configuration may resemble a folding arm design, the key distinction lies in the continuous deformability of the soft arms, offering advantages that rigid structures cannot achieve. The softness is not just structural but functional, enabling smooth force distribution and dynamic adjustments during flight. This allows the Soft-PVTOL to transition seamlessly between underactuated and overactuated behaviors, enhancing both maneuverability and resilience in unpredictable environments. Unlike revolute joints, which impose discrete constraints, the soft arms provide finer control by adapting forces across multiple points. This adaptability supports precise, energy-efficient maneuvers and safer interactions, making the soft structure essential for applications where flexibility and performance are critical.

A future direction lies in path planning, where the focus will be on deducing desired trajectories, considering potential combinations and interactions between complex maneuvers and the desired curvature angles for the arms. This involves understanding how different maneuvers and arm configurations can be combined and coordinated to achieve optimal paths and movements.

There is a lot of open research in modeling, control, design, and construction in this field. This work opens the door for future investigation in more complex scenarios where the constant curvature assumption is relaxed or when the arms can twist. An immediate continuation of this research is the extension for soft quadrotors and their design and construction.

\onecolumn
\appendix{\textbf{Appendix}}
\section{Computations of the Euler-Lagrange equation}\label{ap1}
First, we must to compute the Lagrangian, $\mathcal{L} = \mathcal{K} - \mathcal{P}$, where the Kinetic energy has translational $\mathcal{K}_T$ and rotational $\mathcal{K}_R$ components:
\begin{equation}
    \mathcal{K} = \underbrace{\frac{1}{2}m \dot{p}_v^{\intercal}\dot{p}_v + \frac{1}{2}m_l \dot{p}_{l}^{\intercal}\dot{p}_{l} + \frac{1}{2}m_r \dot{p}_{r}^{\intercal}\dot{p}_{r}}_{\mathcal{K}_T} +  \underbrace{\frac{1}{2}I\dot{\theta}^2 + \frac{1}{2}I_l\dot{q_l}^2 +  \frac{1}{2}I_r\dot{q_r}^2
 }_{\mathcal{K}_R},
\end{equation}
where the translational kinetic energy is rewritten as follows,
\begin{equation}
\begin{aligned} 
  \mathcal{K}_T &= \frac{1}{2}m\left( \dot{x}_v^2 + \dot{z}_v^2 \right) + \frac{1}{2}m_{l}\dot{x}_v^{2}+\frac{1}{2}m_{l}\dot{z}_v^{2}+\frac{1}{2}%
l_{l}^{2}\frac{m_{l}}{q_{l}^{2}}\dot{q}_l^{2} +l_{l}^{2}\frac{m_{l}}{q_{l}^{4}} \dot{q}_l^{2} -l_{l}^{2}m_{l}\frac{\cos
q_{l}}{q_{l}^{4}}\dot{q}_l^{2}-l_{l}^{2}m_{l}\frac{\sin q_{l}}{q_{l}^{3}}\dot{q}_l^{2} - l_{l}\frac{m_{l}}{q_{l}^{2}}\dot{q}_l\dot{z}_v \\
&\hspace{5mm}  -l_{l}m_{l}\frac{\cos
q_{l}}{q_{l}}\dot{q}_l\dot{x}_v + l_{l}m_{l}\frac{\cos q_{l}}{q_{l}^{2}}\dot{q}_l\allowbreak \dot{z}_v+l_{l}%
m_{l}\frac{\sin q_{l}}{q_{l}^{2}}\dot{q}_l\dot{x}_v  + l_{l}m_{l}\frac{\sin q_{l}}{q_{l}}\dot{q}_l\dot{z}_v + \frac{1}{2}m_{r}\dot{x}_v^{2}+\frac{1}{2}m_{r}\dot{z}_v^{2}+\frac{1}{2}l_{r}^{2}\frac{m_{r}}{q_{r}^{2}}\dot{q}_r^{2}\\
&\hspace{10mm} + l_{r}^{2}\frac{m_{r}}{q_{r}^{4}}\dot{q}_r^{2}-l_{r}^{2}m_{r}\frac{\cos
q_{r}}{q_{r}^{4}}\dot{q}_r^{2}-l_{r}^{2}m_{r}\frac{\sin q_{r}}{q_{r}^{3}}\dot{q}_r^{2}  - l_{r}\frac{m_{r}}{q_{r}^{2}}\dot{q}_r\dot{z}_v + l_{r}m_{r}\frac{\cos
q_{r}}{q_{r}}\dot{q}_r\dot{x}_v+l_{r}m_{r}\frac{\cos q_{r}}{q_{r}^{2}}\dot{q}_r\dot{z}_v  \\
&\hspace{15mm} - l_{r}m_{r}\frac{\sin q_{r}}{q_{r}^{2}}\dot{q}_r\dot{x}_v + l_{r}m_{r}\frac{\sin q_{r}}{q_{r}}\dot{q}_r\dot{z}_v .
\end{aligned}
\end{equation}
The potential energy is
\begin{equation}
    \mathcal{P} =  mgz_v + m_l g \left(z_v +  l_l \frac{1-\cos{q_l}}{q_l} \right) + m_r g \left( z_v +  l_r  \frac{1-\cos{q_r}}{q_r} \right).
\end{equation}
Thus, the Lagrangian is
\begin{equation}
\begin{aligned}
\mathcal{L} &= \mathcal{K} - \mathcal{P} \\
&= \frac{1}{2}m \dot{p}_v^{\intercal}\dot{p}_v + \frac{1}{2}m_l \dot{p}_{l}^{\intercal}\dot{p}_{l}  +  \frac{1}{2}m_r \dot{p}_{r}^{\intercal}\dot{p}_{r} + \frac{1}{2}I\dot{\theta}^2 + \frac{1}{2}I_l\dot{q_l}^2  +  \frac{1}{2}I_r\dot{q_r}^2 - (gz_v)\left( m + m_l + m_r \right)\\
&\hspace{5mm}  - (g)\left( m_l l_l \frac{1-\cos{q_l}}{q_l} + m_r l_r \frac{1-\cos{q_r}}{q_r}\right) 
\end{aligned}
\end{equation}
that can be expressed as follows,
\begin{equation}
\begin{aligned}
\mathcal{L} &= \frac{1}{2}m\left( \dot{x}_v^2 + \dot{z}_v^2 \right) + \frac{1}{2}m_{l}\dot{x}_v^{2} + \frac{1}{2}m_{l}\dot{z}_v^{2} + \frac{1}{2}
l_{l}^{2}\frac{m_{l}}{q_{l}^{2}}\dot{q}_l^{2}  + l_{l}^{2}\frac{m_{l}}{q_{l}^{4}}
\dot{q}_l^{2}  -l_{l}^{2}m_{l}\frac{\cos
q_{l}}{q_{l}^{4}}\dot{q}_l^{2}-l_{l}^{2}m_{l}\frac{\sin q_{l}}{q_{l}^{3}}\dot{q}_l^{2} - l_{l}
\frac{m_{l}}{q_{l}^{2}}\dot{q}_l\dot{z}_v 
\\
&\hspace{5mm} - l_{l}m_{l}\frac{\cos
q_{l}}{q_{l}}\dot{q}_l\dot{x}_v  + l_{l}m_{l}\frac{\cos q_{l}}{q_{l}^{2}}\dot{q}_l\dot{z}_v + l_{l}
m_{l}\frac{\sin q_{l}}{q_{l}^{2}}\dot{q}_l\dot{x}_v  + l_{l}m_{l}\frac{\sin q_{l}}{q_{l}}
\dot{q}_l\dot{z}_v + \frac{1}{2}m_{r}\dot{x}_v^{2} + \frac{1}{2}m_{r}\dot{z}_v^{2} + \frac{1}{2}l_{r}^{2}\frac{m_{r}}{q_{r}^{2}}\dot{q}_r^{2} \\
&\hspace{10mm}  + l_{r}^{2}\frac{m_{r}}{q_{r}^{4}}\dot{q}_r^{2} - l_{r}^{2}m_{r}\frac{\cos
q_{r}}{q_{r}^{4}}\dot{q}_r^{2} - l_{r}^{2}m_{r}\frac{\sin q_{r}}{q_{r}^{3}}\dot{q}_r^{2} - l_{r}
\frac{m_{r}}{q_{r}^{2}}\dot{q}_r\dot{z}_v   + l_{r}m_{r}\frac{\cos
q_{r}}{q_{r}}\dot{q}_r\dot{x}_v + l_{r}m_{r}\frac{\cos q_{r}}{q_{r}^{2}}\dot{q}_r\dot{z}_v    \\
&\hspace{15mm} - l_{r}
m_{r}\frac{\sin q_{r}}{q_{r}^{2}}\dot{q}_r\dot{x}_v + l_{r}m_{r}\frac{\sin q_{r}}{q_{r}}\dot{q}_r\dot{z}_v + \frac{1}{2}I\dot{\theta}^2 + \frac{1}{2}I_l\dot{q_l}^2  +  \frac{1}{2}I_r\dot{q_r}^2 \\
&\hspace{20mm} - (gz_v)\left( m + m_l + m_r \right) - (g)\left( m_l l_l \frac{1-\cos{q_l}}{q_l} + m_r l_r \frac{1-\cos{q_r}}{q_r}\right).
\end{aligned}
\end{equation}

From \eqref{eq:general} it follows that
\begin{equation}
    \dot{\textbf{q}} = \begin{pmatrix}
        \dot{x}_v & \dot{z}_v & \dot{\theta} & \dot{q}_l & \dot{q}_r
    \end{pmatrix}^{\intercal}
\end{equation}
and then we compute,
\begin{equation}
 \frac{\partial \mathcal{L}}{\partial \dot{\textbf{q}}} = \begin{pmatrix}
        \frac{\partial \mathcal{L}}{\partial \dot{x}_v} &
        \frac{\partial \mathcal{L}}{\partial \dot{z}_v} &
        \frac{\partial \mathcal{L}}{\partial \dot{\theta}} &
        \frac{\partial \mathcal{L}}{\partial \dot{q}_l} &
        \frac{\partial \mathcal{L}}{\partial \dot{q}_r} 
    \end{pmatrix}^{\intercal}
\end{equation}
where
\begin{equation}
\begin{aligned}
\frac{\partial \mathcal{L}}{\partial \dot{x}_v} &=  m\dot{x}_v + m_l\dot{x}_v - l_lm_l\frac{\cos{q_l}}{q_l}\dot{q}_l + l_lm_l\frac{\sin{q_l}}{q_l^2}\dot{q}_l    + m_r\dot{x}_v + l_rm_r\frac{\cos{q_r}}{q_r}\dot{q}_r - l_rm_r\frac{\sin{q_r}}{q_r^2}\dot{q}_r \\
\frac{\partial \mathcal{L}}{\partial \dot{z}_v} &=  m\dot{z}_v + m_l\dot{z}_v - l_l\frac{m_l}{q_l^2}\dot{q}_l + l_lm_l\frac{\cos{q_l}}{q_l^2}\dot{q}_l + l_lm_l\frac{\sin{q_l}}{q_l}\dot{q}_l   +  m_r\dot{z}_v - l_r \frac{m_r}{q_r^2}\dot{q}_r + l_rm_r\frac{\cos{q_r}}{q_r^2}\dot{q}_r + l_rm_r\frac{\sin{q_r}}{q_r}\dot{q}_r \\
\frac{\partial \mathcal{L}}{\partial \dot{\theta}} &= I\dot{\theta} \\
\frac{\partial \mathcal{L}}{\partial \dot{q}_l} &= l_l^2\frac{m_l}{q_l^2}\dot{q}_l + 2l_l^2 \frac{m_l}{q_l^4}\dot{q}_l - 2l_l^2m_l\frac{\cos{q_l}}{q_l^4}\dot{q}_l  - 2l_l^2m_l \frac{\sin{q_l}}{q_l^3}\dot{q}_l - l_l\frac{m_l}{q_l^2}\dot{z}_v - l_lm_l\frac{\cos{q_l}}{q_l}\dot{x}_v   \\
&\hspace{10mm} + l_lm_l\frac{\cos{q_l}}{q_l^2}\dot{z}_v + l_lm_l\frac{\sin{q_l}}{q_l^2}\dot{x}_v  + l_lm_l\frac{\sin{q_l}}{q_l}\dot{z}_v  + I_l\dot{q}_l s \\
\frac{\partial \mathcal{L}}{\partial \dot{q}_r} &= l_r^2\frac{m_r}{q_r^2}\dot{q}_r + 2l_r^2\frac{m_r}{q_r^4}\dot{q}_r - 2l_r^2m_r\frac{\cos{q_r}}{q_r^4}\dot{q}_r  - 2l_r^2m_r\frac{\sin{q_r}}{q_r^3}\dot{q}_r - l_r\frac{m_r}{q_r^2}\dot{z}_v + l_rm_r\frac{\cos{q_r}}{q_r}\dot{x}_v  \\
&\hspace{10mm}+ l_rm_r\frac{\cos{q_r}}{q_r^2}\dot{z}_v - l_rm_r\frac{\sin{q_r}}{q_r^2}\dot{x}_v + l_rm_r\frac{\sin{q_r}}{q_r}\dot{z}_v + I_r\dot{q}_r .
\end{aligned}
\end{equation}

Now, we compute:
\begin{equation}\label{eq:ddtdldqdot}
    \frac{d}{dt} \left(\frac{\partial \mathcal{L}}{\partial \dot{\textbf{q}}} \right) = \begin{pmatrix}
        a_1 & a_2 & a_3 & a_4 & a_5
    \end{pmatrix}^{\intercal}
\end{equation}
where
\begin{equation}
\begin{aligned}
        a_1 &= (m+m_l)\ddot{x}_v  -  l_lm_l\left(  \frac{\cos{q_l}}{q_l} \right) \ddot{q_l}  + l_lm_l \left( \frac{q_l\sin{q_l} + \cos{q_l}}{q_l^2} \right) \dot{q_l}^2 + l_lm_l\left(  \frac{\sin{q_l}}{q_l^2} \right) \ddot{q_l}  \\
        &\hspace{5mm} + l_lm_l \left( \frac{q_l\cos{q_l} - 2\sin{q_l}}{q_l^3} \right) \dot{q_l}^2 + m_r\ddot{x}_v   + l_rm_r\left(  \frac{\cos{q_r}}{q_r} \right) \ddot{q_r} - l_rm_r \left( \frac{q_r\sin{q_r} + \cos{q_r}}{q_r^2} \right) \dot{q_r}^2 \\
        &\hspace{10mm}- l_rm_r\left(  \frac{\sin{q_r}}{q_r^2} \right) \ddot{q_r} - l_rm_r \left( \frac{q_r\cos{q_r} - 2\sin{q_r}}{q_r^3} \right) \dot{q_r}^2 \\
        a_2 &= (m+m_l)\ddot{z}_v -l_lm_l\left(\frac{1}{q_l^2}\right)\ddot{q_l} + 2l_lm_l \left(\frac{1}{q_l^3}\right)\dot{q_l}^2  + l_lm_l\left(  \frac{\cos{q_l}}{q_l^2} \right) \ddot{q}_l - l_lm_l \left( \frac{q_l\sin{q_l} + 2\cos{q_l}}{q_l^3} \right) \dot{q_l}^2 \\
        &\hspace{5mm} + l_lm_l\left(  \frac{\sin{q_l}}{q_l} \right) \ddot{q}_l + l_lm_l \left( \frac{q_l\cos{q_l} - \sin{q_l}}{q_l^2} \right) \dot{q}_l^2 + m_r\ddot{z}_v -l_rm_r \left(\frac{1}{q_r^2}\right)\ddot{q}_r + 2l_rm_r \left(\frac{1}{q_r^3}\right)\dot{q}_r^2 \\
         &\hspace{10mm} + l_rm_r\left(  \frac{\cos{q_r}}{q_r^2} \right) \ddot{q}_r - l_rm_r \left( \frac{q_r\sin{q_r} + 2\cos{q_r}}{q_r^3} \right) \dot{q}_r^2 + l_rm_r\left(  \frac{\sin{q_r}}{q_r} \right) \ddot{q}_r + l_rm_r \left( \frac{q_r\cos{q_r} - \sin{q_r}}{q_r^2} \right) \dot{q}_r^2 \\
         a_3 &= I\ddot{\theta} \\
        a_4 &= l_l^2m_l\left(\frac{1}{q_l^2}\right)\ddot{q}_l - 2l_l^2m_l \left(\frac{1}{q_l^3}\right)\dot{q}_l^2 + 2l_l^2m_l\left(\frac{1}{q_l^4}\right)\ddot{q}_l  - 8l_l^2m_l \left(\frac{1}{q_l^5}\right)\dot{q}_l^2
        -2l_l^2m_l\left(  \frac{\cos{q_l}}{q_l^4} \right) \ddot{q}_l \\
        &\hspace{5mm} + 2l_l^2m_l \left( \frac{q_l\sin{q_l} + 4\cos{q_l}}{q_l^5} \right) \dot{q}_l^2 -2l_l^2m_l\left(  \frac{\sin{q_l}}{q_l^3} \right) \ddot{q}_l - 2l_l^2m_l \left( \frac{q_l\cos{q_l} - 3\sin{q_l}}{q_l^4} \right) \dot{q}_l^2 
        -l_lm_l \left(\frac{1}{q_l^2}\right)\ddot{z}_v  \\
        &\hspace{10mm}   +  2l_lm_l\left(\frac{1}{q_l^3}\right)\dot{q}_l\dot{z}_v - l_lm_l\left(  \frac{\cos{q_l}}{q_l} \right) \ddot{x}_v + l_lm_l \left( \frac{q_l\sin{q_l} + \cos{q_l}}{q_l^2} \right) \dot{q_l}\dot{x}_v
        + l_lm_l\left(  \frac{\cos{q_l}}{q_l^2} \right) \ddot{z}_v \\
        &\hspace{15mm}    - l_lm_l \left( \frac{q_l\sin{q_l} + 2\cos{q_l}}{q_l^3} \right) \dot{q}_l\dot{z}_v + l_lm_l\left(  \frac{\sin{q_l}}{q_l^2} \right) \ddot{x}_v  + l_lm_l \left( \frac{q_l\cos{q_l} - 2\sin{q_l}}{q_l^3} \right) \dot{q_l}\dot{x}_v \\
        &\hspace{20mm}  
        + l_lm_l\left(  \frac{\sin{q_l}}{q_l} \right) \ddot{z}_v  + l_lm_l \left( \frac{q_l\cos{q_l} - \sin{q_l}}{q_l^2} \right) \dot{q_l}\dot{z}_v + I_l \ddot{q}_l
\end{aligned}
\end{equation}
and
\begin{equation}
\begin{aligned}
a_5 &= l_r^2m_r\left(\frac{1}{q_r^2}\right)\ddot{q}_r - 2l_r^2m_r \left(\frac{1}{q_r^3}\right)\dot{q}_r^2
+ 2l_r^2m_r\left(\frac{1}{q_r^4}\right)\ddot{q}_r - 8l_r^2m_r \left(\frac{1}{q_r^5}\right)\dot{q}_r^2
-2l_r^2m_r\left(  \frac{\cos{q_r}}{q_r^4} \right) \ddot{q}_r \\
&\hspace{5mm} + 2l_r^2m_r \left( \frac{q_r\sin{q_r} + 4\cos{q_r}}{q_r^5} \right) \dot{q}_r^2 - 2l_r^2m_r\left(  \frac{\sin{q_r}}{q_r^3} \right) \ddot{q}_r   - 2l_r^2m_r \left( \frac{q_r\cos{q_r} - 3\sin{q_r}}{q_r^4} \right) \dot{q}_r^2
-l_rm_r\left(\frac{1}{q_r^2}\right)\ddot{z}_v \\
&\hspace{10mm}+  2l_rm_r\left(\frac{1}{q_r^3}\right)\dot{q}_r\dot{z}_v  + l_rm_r\left(  \frac{\cos{q_r}}{q_r} \right) \ddot{x}_v  - l_rm_r \left( \frac{q_r\sin{q_r} + \cos{q_r}}{q_r^2} \right) \dot{q}_r\dot{x}_v
+ l_rm_r\left(  \frac{\cos{q_r}}{q_r^2} \right) \ddot{z}_v  \\
&\hspace{15mm} - l_rm_r \left( \frac{q_r\sin{q_r} + 2\cos{q_r}}{q_r^3} \right) \dot{q}_r\dot{z}_v - l_rm_r\left(  \frac{\sin{q_r}}{q_r^2} \right) \ddot{x}_v   - l_rm_r \left( \frac{q_r\cos{q_r} - 2\sin{q_r}}{q_r^3} \right) \dot{q}_r\dot{x}_v  \\
&\hspace{25mm} + l_rm_r\left(  \frac{\sin{q_r}}{q_r} \right) \ddot{z}_v + l_rm_r \left( \frac{q_r\cos{q_r} - \sin{q_r}}{q_r^2} \right) \dot{q}_r\dot{z}_v + I_r \ddot{q}_r
\end{aligned}
\end{equation}

Now, we proceed to compute:
\begin{equation}
\begin{aligned}
 \frac{\partial \mathcal{L}}{\partial \textbf{q}} &= \begin{pmatrix}
        \frac{\partial \mathcal{L}}{\partial {x}_v} &
        \frac{\partial \mathcal{L}}{\partial {z}_v} &
        \frac{\partial \mathcal{L}}{\partial {\theta}} &
        \frac{\partial \mathcal{L}}{\partial {q}_l} &
        \frac{\partial \mathcal{L}}{\partial {q}_r} 
    \end{pmatrix}^{\intercal}\\
&=\begin{pmatrix}
0 & -g(m + m_l + m_r) & 0 & \frac{\partial \mathcal{L}}{\partial {q}_l} & \frac{\partial \mathcal{L}}{\partial {q}_r} 
 \end{pmatrix}^{\intercal}
\end{aligned}
\end{equation}
where 
\begin{equation}
    \begin{aligned}
\frac{\partial \mathcal{L}}{\partial {q}_l}&=  - l_{l}^{2}m_{l} \frac{1}{q_{l}^{3}}\dot{q}_l^{2} - 4l_{l}^{2}m_{l}\frac{1}{q_{l}^{5}}
\dot{q}_l^{2}  + l_{l}^{2}m_{l} \left( \frac{q_l\sin{q_l} + 4\cos{q_l}}{q_l^5} \right)\dot{q}_l^{2}  - l_{l}^{2}m_{l} \left( \frac{q_l\cos{q_l} - 3\sin{q_l}}{q_l^4} \right)\dot{q}_l^{2} + 2l_{l}m_{l}
\frac{1}{q_{l}^{3}}\dot{q}_l\dot{z}_v \\
&\hspace{5mm} + l_{l}m_{l} \left( \frac{q_l\sin{q_l} + \cos{q_l}}{q_l^2} \right)\dot{q}_l\dot{x}_v  + l_{l}
m_{l} \left( \frac{q_l\cos q_{l} - 2\sin{q_l}}{q_{l}^{3}} \right)\dot{q}_l\dot{x}_v  - l_{l}m_{l} \left( \frac{ q_l\sin{q_l} + 2\cos{q_l}}{q_{l}^{3}} \right)\dot{q}_l\dot{z}_v \\
&\hspace{10mm} + l_{l}m_{l} \left( \frac{q_l\cos q_{l} - \sin{q_l}}{q_{l}^2} \right)\dot{q}_l\dot{z}_v - (g m_l l_l)\left(  \frac{q_l\sin{q_l} + \cos{q_l} - 1}{q_l^2} \right)
    \end{aligned}
\end{equation}
and  
\begin{equation}
    \begin{aligned}
        \frac{\partial \mathcal{L}}{\partial {q}_r} &= - l_{r}^{2} m_{r}  \frac{1}{q_{r}^{3}} \dot{q}_r^{2}
         - 4l_{r}^{2}m_{r}  \frac{1}{q_{r}^{5}} \dot{q}_r^{2}   + l_{r}^{2}m_{r} \left( \frac{q_r\sin{q_r} + 4\cos{q_r}}{q_{r}^{5}} \right) \dot{q}_r^{2}  - l_{r}^{2}m_{r} \left( 
 \frac{q_r\cos{q_r} - 3\sin{q_r}}{q_r^{4}} \right)\dot{q}_r^{2}
 + 2l_{r}m_{r}
\frac{1}{q_{r}^{3}}\dot{q}_r\dot{z}_v \\
        &\hspace{5mm} - l_{r}m_{r} \left( \frac{q_r\sin{q_r} + \cos{q_r}}{q_r^2} \right)\dot{q}_r\dot{x}_v - l_{r}
m_{r} \left( \frac{q_r\cos{q_r} -2\sin{q_r}}{q_{r}^{3}} \right)\dot{q}_r\dot{x}_v - l_{r}m_{r} \left( \frac{q_r\sin{q_r} + 2\cos{q_r}}{q_{r}^{3}} \right)\dot{q}_r\dot{z}_v \\
&\hspace{10mm} + l_{r}m_{r} \left( \frac{q_r\cos{q_r} - \sin{q_r}}{q_{r}^2} \right)\dot{q}_r\dot{z}_v - (gm_r l_r)\left(  \frac{q_r\sin{q_r} + \cos{q_r} - 1}{q_r^2}\right)
    \end{aligned}
\end{equation}

\section{Proof of Lemma \ref{lemm:matrix}}\label{app:lemma_matrix}
To prove the positiveness of $D(\mathbf{q})$, it suffices to demonstrate that the principal minors of $D$ are always positive. It is clear by the definition of parameters that the first three principal minors are positive. The fourth principal minor is computed as follows:
\begin{equation}
\textrm{det}(D_4)=
\Theta _{3} 
\begin{vmatrix}
\rho _{1} & 0 & 0 & \mathscr{D}_{1} \\ 
0 & \rho _{1} & 0 & \mathscr{D}_{2} \\ 
0 & 0 & \rho _{2} & 0 \\ 
\mathscr{D}_{1} & \mathscr{D}_{2} & 0 & l_l\mathscr{D}_{5}+\rho_{3}%
\end{vmatrix}
\end{equation}
where $\rho _{1}=\frac{\Theta _{1}}{\Theta _{3}},$ $\rho _{2}=\frac{\Theta _{2}}{\Theta _{3}},$ $\rho _{3}=\frac{\Theta _{5}}{\Theta _{3}}$. After several computations, it follows that:
\begin{equation}
    \begin{aligned}
    \textrm{det}(D_4) &= -\frac{\Theta_3}{q_l^{4}}\Bigl( -\rho_{2}\rho _{3}q_l^{4}\rho_{1}^{2} - l_l \rho_{2}q_l^{2}\rho_{1}^{2} + \rho_{2}q_l^{2}\rho_{1}\cos^{2}{q_l}  +  
 \rho_{2}q_l^{2}\rho_{1}\sin^{2}{q_l} + 2l_l\rho_{2}q_l\rho_{1}^{2}\sin{q_l} - 2\rho_{2}l\rho_{1}\sin{q_l} \\ 
 &\hspace{10mm} + 2l_l\rho_{2}\rho_{1}^{2}\cos{q_l} - 2l_l\rho_{2}\rho_{1}^{2} + \rho_{2}\rho_{1}\cos^{2}{q_l}  - 2\rho_{2}\rho_{1}\cos{q_l} + \rho_{2}\rho_{1}\sin^{2}{q_l} + \rho_{2}\rho_{1}\Bigr) \\
&= \frac{\Theta_3}{q_l^{4}}\rho_{1}\rho _{2}\Bigl( 2\cos{q_l} + 2 l_l \rho_{1} + 2q_l\sin{q_l} -q_l^{2}  - 2l_l\rho_{1}\cos{q_l}
+ l_l q_l^{2}\rho_{1}+q_l^{4}\rho_{1}\rho_{3}-2 l_l q_l\rho_{1}\sin{q_l} -2\Bigr) \\
&= \frac{\Theta_3}{q_l^{4}}\rho_{1}\rho_{2}\Bigl[\Bigl( -q_l^{2}-2+2\cos{q_l}+2q_l\sin{q_l} \Bigr)  +  l_l \rho_{1} \Bigl( q_l^{2}  + 2 -2  \cos{q_l} - 2q_l \sin{q_l}  \Bigr) +\rho_{1}\rho_{3} q_l^{4}   \Bigr] \\
&= -\left( \Theta_3 \rho_{1}\rho_{2}\right) \left[ \underbrace{\left( \frac{q_l^{2}+2-2\cos{q_l}
- 2 q_l\sin{q_l} }{q_l^{4}}\right)}_{\xi_1} + \left( 2 l_l \rho_{1}\right) \underbrace{\left( \frac{\cos{q_l} + q_l\sin{q_l} -1-\frac{1}{2}q_l^{2}-q_l^{4}\frac{\rho_{3}}{2 l_l }}{q_l^{4}}\right)}_{\xi_2} \right].
    \end{aligned}
\end{equation}
Simple computations show that,
\begin{equation}
    \lim_{q_l\to 0 } \xi_1 = \frac{1}{4}, \quad \lim_{q_l\to 0 } \xi_2 =  -\frac{1}{8}-\frac{\rho _{3}}{2 l_l}
\end{equation}
And also that $\lim_{q_l\to \infty } \xi_1 = \lim_{q_l\to -\infty } \xi_1 = 0$ and $ \lim_{q_l\to \infty } \xi_2 = \lim_{q_l\to -\infty } \xi_2 = -\frac{\rho_3}{2l_l}$. Thus, one can demonstrate that, at least locally, the maximum of $\xi_1$ and the minimum of $\xi_2$ are in $q_l=0$. From the above, the worst-case scenario occurs when $q_l=0$, and then:
\begin{equation}
\Theta_3 \rho_{1}\rho_{2} \left[ \left( 2 l_l \rho_{1}\right) \left(\frac{1}{8} + \frac{\rho_{3}}{2 l_l }\right) - \frac{1}{4}\right] \leq \textrm{det}(D_4) \leq \Theta_3 \rho_{1}\rho_{2}
\end{equation}
as long as
\begin{equation}
    \begin{aligned}
        \frac{1}{4} < \left( 2 l_l \rho _{1}\right) \left( \frac{1}{8} + \frac{\rho_{3}}{2l_l}\right) < 1%
    \end{aligned}
\end{equation}
or from the LHS of the last inequality, $\rho _{1}>\frac{1}{l_l+4\rho _{3}}$, equivalently $\Theta _{1}>\frac{\Theta _{3}^{2}}{l_l\Theta _{3}+4\Theta _{5}}$.

Now, we compute $\textrm{det}(D(\textbf{q}))$. Let us assume for a moment that $\rho_4 = \frac{\Theta_4}{\Theta_3}$, $\rho_5 = \frac{\Theta_6}{\Theta_3}$:
\begin{equation}
    \textrm{det}(D) = \Theta_3\begin{vmatrix}
        \rho_1 & 0 & 0 &    \mathscr{D}_{1}   & \rho_4 \mathscr{D}_{3} \\
        0 & \rho_1 & 0 & \mathscr{D}_{2} &    \rho_4 \mathscr{D}_{4}   \\
        0 & 0 & \rho_2 & 0 & 0 \\
          \mathscr{D}_{1} & \mathscr{D}_{2} & 0 & l_l \mathscr{D}_{5} + \rho_3  & 0 \\
          \rho_4 \mathscr{D}_{3}   & \rho_4 \mathscr{D}_{4} & 0 & 0 & l_r\rho_4 \mathscr{D}_{6} + \rho_5
    \end{vmatrix}
\end{equation}
where $\rho_4 = \frac{\Theta_6}{\Theta_3}$. Thus\footnote{This is computed with Mathematica software.}, 
\begin{equation}
    \begin{aligned}
\textrm{det}(D(\textbf{q})) &=\frac{\Theta_3\rho_2}{q_l^4 q_r^4}\Biggl( \rho_1 \Bigl[ -2+2 l_l \rho_1-q_l^2+l_l \rho_1 q_l^2+\rho_1 
\rho_3 q_l^4 + (2-2 l_l \rho_1) \cos{q_l}-2 (-1+l_l \rho_1) q_l \sin{q_l} \Bigr] \Bigl[ 2 \rho_4 l_r  \\
&\hspace{5mm} + \rho_4 l_r q_r^2 + \rho_5 q_r^4-2 \rho_4 l_r \cos{q_r}-2 \rho_4 l_r q_r \sin{q_r} \Bigr] + \rho_4^2 \Bigl( -1+\cos{q_r}+q_r \sin{q_r} \Bigr) \Bigl(  -\rho_1 \Bigl[ 2 l_l+l_l q_l^2+
\rho_3 q_l^4 \\
&\hspace{10mm}- 2 l_l \cos{q_l} - 2 l_l q_l \sin{q_l} \Bigr]  \Bigl[ -1+\cos{q_r}+q_r \sin{q_r} \Bigr]  + \Bigl[ q_l \cos{q_l}-\sin{q_l} \Bigr] \Bigl[ - q_l \cos{q_l} - q_r \cos{q_r}  \\
&\hspace{15mm} + q_l \cos{(q_l+q_r)} + q_r \cos{(q_l+q_r)} + \sin{q_l} + \sin{q_r} - \sin{(q_l+q_r)} + q_l q_r \sin{(q_l+q_r)} \Bigr] \Bigr)\\
&\hspace{20mm}  - \rho_4^2 \Bigl( q_r \cos{q_r}-\sin{q_r} \Bigr) \Bigl( \rho_1 \Bigl[ 2 l_l+l_l q_l^2  + \rho_3 q_l^4-2 l_l \cos{q_l}-2 l_l q_l \sin{q_l} \Bigr] \Bigl[ q_r \cos{q_r}-\sin{q_r} \Bigr] \\
&\hspace{25mm} - \Bigl[ -1 +\cos{q_l}+q_l \sin{q_l} \Bigr] \Bigr[ -q_l \cos{q_l}-q_r \cos{q_r} +q_l \cos{(q_l+q_r)} \\
&\hspace{30mm}+q_r \cos{(q_l+q_r)}+\sin{q_l}+\sin{q_r}-\sin{(q_l+q_r)} +q_l q_r \sin{(q_l+q_r)} \Bigl] \Bigr) \Biggr) .
    \end{aligned}
\end{equation}
Proceeding as before, computing the limits one gets:
\begin{equation}
\begin{aligned}
    \lim_{q_l,q_r\to 0 } \textrm{det}(D(\textbf{q}))  &=  
    \left( \frac{\Theta_3 \rho_1 \rho_2 }{16} \right) \Bigl(- \rho_4 l_r -4 \rho_5   -4 \rho_3 \Bigl[ \rho_4^2 - \rho_1 \rho_4 l_r -4 \rho_1 \rho_5 \Bigr] + l_l \Bigl[ -\rho_4^2 + \rho_1 \rho_4 l_r +4 \rho_1 \rho_5 \Bigr] \Bigr) \\
    &= \left( \frac{\Theta_3 \rho_1 \rho_2 }{16} \right) \Bigl(- \rho_4 l_r - 4 \rho_5 - 4 \rho_3  \rho_4^2 + 4 \rho_3 \rho_1 \rho_4 l_r  + 16 \rho_3 \rho_1 \rho_5  - l_l \rho_4^2 + l_l \rho_1 \rho_4 l_r +4  \rho_1 \rho_5 l_l \Bigr) \\
    &= \left( \frac{\Theta_3 \rho_1 \rho_2 }{16} \right) \Bigl( \Bigl[ - \rho_4 l_r - 4 \rho_5 - 4 \rho_3  \rho_4^2 - l_l \rho_4^2 \Bigr]   + \Bigl[  4 \rho_3 \rho_1 \rho_4 l_r  + 16 \rho_3 \rho_1 \rho_5 + l_l \rho_1 \rho_4 l_r +4  \rho_1 \rho_5 l_l 
       \Bigr]   \Bigr) ,
\end{aligned}
\end{equation}
and $\textrm{det}(D(\textbf{q})) > 0$ as long as,
\begin{equation}
    4 \rho_3 \rho_1 \rho_4 l_r  + 16 \rho_3 \rho_1 \rho_5 + l_l \rho_1 \rho_4 l_r +4  \rho_1 \rho_5 l_l  >   \rho_4 l_r + 4 \rho_5 + 4 \rho_3  \rho_4^2 + l_l \rho_4^2.
\end{equation}
And also:
\begin{equation}
\lim_{q_l,q_r\to \infty } \textrm{det}(D(\textbf{q})) = \lim_{q_l,q_r\to -\infty } \textrm{det}(D(\textbf{q}))  =  \Theta_3\rho_1^2 \rho_2 \rho_3 \rho_5.
\end{equation}
Hence, given that all the principal minors of $D(\mathbf{q})$ are positive, we conclude that the matrix $D(\mathbf{q})$ is positive definite. This indicates that $D(\mathbf{q})$ is symmetric and possesses all positive eigenvalues, contributing to its positive definite nature.
\section{Proof of Lemma \ref{eq:SkewMatrix}}\label{app:lemmaSkew}
Now, we compute $\dot{D}(\textbf{q})$ to verify that $\dot{D}(\textbf{q}) - 2C(\textbf{q}, \dot{\textbf{q}})$ is skew-symmetric. Notice that,
\begin{equation}\label{eq:inertia_dot}
    \dot{D} = \begin{pmatrix}
        0 & 0 & 0 & \Theta_3 \mathscr{C}_{1} \dot{q}_l & - \Theta_4 \mathscr{C}_{3} \dot{q}_r   \\
        0 & 0 & 0 & \Theta_3 \mathscr{C}_{2} \dot{q}_l & \Theta_4 \mathscr{C}_{4}  \dot{q}_r  \\
        0 & 0 & 0 & 0 & 0 \\
        \Theta_3 \mathscr{C}_{1} \dot{q}_l & \Theta_3 \mathscr{C}_{2} \dot{q}_l & 0 & -4l_l\Theta_3 \mathscr{C}_{5} \dot{q}_l  & 0 \\
         - \Theta_4 \mathscr{C}_{3} \dot{q}_r    & \Theta_4 \mathscr{C}_{4} \dot{q}_r   & 0 & 0 & -4l_r\Theta_4 \mathscr{C}_{6} \dot{q}_r
    \end{pmatrix},
\end{equation}
and 
\begin{equation}\label{eq:Coriolis_reduced_subs}
    -2C(\textbf{q}, \dot{\textbf{q}}) = \begin{pmatrix}
        0 & 0 & 0 & -2\Theta_3 \mathscr{C}_{1} \dot{q_l} & 2 \Theta_4 \mathscr{C}_{3}  \dot{q_r} \\
        0 & 0 & 0 &  -2\Theta_3 \mathscr{C}_{2} \dot{q}_l &  -2\Theta_4 \mathscr{C}_{4} \dot{q}_r \\
        0 & 0 & 0 & 0 & 0 \\
        0 & 0 & 0 & 4l_l\Theta_3 \mathscr{C}_{5} \dot{q}_l & 0  \\
        0 & 0 & 0 & 0 & 4l_r\Theta_4 \mathscr{C}_{6} \dot{q}_r 
    \end{pmatrix},
\end{equation}
and finally, we compute
\begin{equation}\label{eq:skew}
\begin{aligned}
    \dot{D}(\textbf{q}) - 2C(\textbf{q}, \dot{\textbf{q}}) &=  \begin{pmatrix}
        0 & 0 & 0 & -\Theta_3 \mathscr{C}_{1} \dot{q}_l &  \Theta_4 \mathscr{C}_{3} \dot{q}_r   \\
        0 & 0 & 0 & -\Theta_3 \mathscr{C}_{2} \dot{q}_l    & -\Theta_4 \mathscr{C}_{4} \dot{q}_r  \\
        0 & 0 & 0 & 0 & 0 \\
        \Theta_3 \mathscr{C}_{1} \dot{q}_l   & \Theta_3 \mathscr{C}_{2} \dot{q}_l & 0 & 0  & 0 \\
         - \Theta_4 \mathscr{C}_{3} \dot{q}_r    & \Theta_4 \mathscr{C}_{4} \dot{q}_r   & 0 & 0 & 0
    \end{pmatrix},
\end{aligned}
\end{equation}
which clearly is skew-symmetric.

\begin{multicols}{2}


\bibliographystyle{plain}        
\bibliography{sample}            
\end{multicols}

\end{document}